\newcommand{\mbs}[1]{\pmb{#1}}
\newcommand{\vect}[1]{{\lowercase{\mbs{#1}}}}
\newcommand{\mat}[1]{{\uppercase{\mbs{#1}}}}
\newcommand{\T}{{\scriptscriptstyle\mathsf{T}}}
\renewcommand{\H}{{\scriptscriptstyle\mathsf{H}}}
\renewcommand{\Re}[1][]{\ifthenelse{\isempty{#1}}{\operatorname{Re}}{\operatorname{Re}\left(#1\right)}}
\renewcommand{\Im}[1][]{\ifthenelse{\isempty{#1}}{\operatorname{Im}}{\operatorname{Im}\left(#1\right)}}
\newcommand{\hv}{\vect{h}}
\newcommand{\pv}{\vect{p}}
\newcommand{\rv}{\vect{r}}
\newcommand{\uv}{\vect{u}}
\newcommand{\wv}{\vect{w}}
\newcommand{\xv}{\vect{x}}
\newcommand{\Am}{\mat{a}}
\newcommand{\Hm}{\mat{h}}
\newcommand{\Mm}{\mat{M}}
\newcommand{\Qm}{\mat{q}}
\newcommand{\Rm}{\mat{r}}
\newcommand{\Um}{\mat{u}}
\newcommand{\Kc}{{\mathcal K}}
\newcommand{\Pc}{{\mathcal P}}
\newcommand{\Id}{\mat{\mathrm{I}}}
\newcommand{\CN}[1][]{\ifthenelse{\isempty{#1}}{\mathcal{N}_{\mathbb{C}}}{\mathcal{N}_{\mathbb{C}}\left(#1\right)}}
\renewcommand{\P}[1][]{\ifthenelse{\isempty{#1}}{\mathbb{P}}{\mathbb{P}\left(#1\right)}}
\newcommand{\E}[1][]{\ifthenelse{\isempty{#1}}{\mathbb{E}}{\mathbb{E}\left[#1\right]}}
\newcommand{\I}[1][]{\ifthenelse{\isempty{#1}}{\mathbb{I}}{\mathbb{I}\left\{#1\right\}}}
\renewcommand{\det}[1][]{\ifthenelse{\isempty{#1}}{\mathrm{det}}{\text{det}\left(#1\right)}}
\newcommand{\trace}[1][]{\ifthenelse{\isempty{#1}}{\mathrm{tr}}{\text{tr}\left(#1\right)}}
\newcommand{\rank}[1][]{\ifthenelse{\isempty{#1}}{\mathrm{rank}}{\text{rank}\left(#1\right)}}
\newcommand{\diag}[1][]{\ifthenelse{\isempty{#1}}{\mathrm{diag}}{\text{diag}\left(#1\right)}}
\newcommand{\Cov}[1][]{\ifthenelse{\isempty{#1}}{\mathsf{Cov}}{\mathsf{Cov}\left(#1\right)}}
\newtheorem{remark}{Remark}[section]
\newtheorem{definition}{Definition}
\newtheorem{example}{Example}
\renewcommand{\rv}[1]{{\mathrm{#1}}}
\newcommand{\rvVec}[1]{\pmb{\mathrm{#1}}}
\newcommand{\rvMat}[1]{\pmb{\mathsf{#1}}}
\newcounter{enumi_saved}
\newcommand{\collectionK}{\boldsymbol{\mathcal{K}}}
\newcommand{\collectionS}{\boldsymbol{\mathcal{S}}}
\newtheorem{claim}{Claim}
\begin{document}
\title{Rate Splitting for Multi-Antenna Downlink: \\ Precoder Design and Practical Implementation}
\author{Zheng Li, Chencheng Ye, Ying Cui, Sheng Yang, and Shlomo Shamai~(Shitz)\thanks{Z. Li and S. Yang are with the laboratory of signals and systems at
		CentraleSup\'elec-CNRS-Universit\'e Paris-Sud, 91192,
		Gif-sur-Yvette, France. Z. Li is also with Orange Labs
		Networks, 92326, Ch\^atillon, France. C. C. Ye and Y. Cui are with Shanghai Jiao Tong University, China.
		S.~Shamai~(Shitz) is with Technion-Israel Institute of Technology, Haifa, Israel. Email:\{zheng.li,
		sheng.yang\}@centralesupelec.fr, \{yechencheng, cuiying\}@sjtu.edu.cn,  sshlomo@ee.technion.ac.il.}
	 \thanks{The work of S. Shamai has been supported by the
		European Union's Horizon 2020 Research And Innovation Programme,
		grant agreement no.~694630.} }
\maketitle
\begin{abstract} 

  Rate splitting~(RS) is a potentially powerful and flexible technique
  for multi-antenna downlink transmission. In this paper, we address
  several technical challenges towards its practical implementation for
  beyond 5G systems. To this end, we focus on a single-cell system with
  a multi-antenna base station (BS) and $K$ single-antenna receivers. We consider RS in its most general form with $2^K-1$ streams, and joint decoding to fully exploit the potential of RS.
  First, we investigate the achievable rates under joint decoding and formulate the precoder design problems to maximize a general utility function, or to minimize the transmit power under pre-defined rate targets.  
  Building upon the concave-convex procedure~(CCCP), we propose
  precoder design algorithms for an \emph{arbitrary} number of users. Our proposed algorithms
  approximate the intractable non-convex problems with a 
  number of successively refined convex problems, and provably converge to
  stationary points of the original problems. Then, to reduce the decoding complexity, we
  consider the optimization of the precoder and the decoding order under successive decoding. Further, we  propose a stream
selection algorithm to reduce the number of precoded signals.
 With a reduced number of
  streams and successive decoding at the receivers, our proposed algorithm can even be
  implemented when the number of users is relatively large, whereas the
  complexity was previously considered as prohibitively high in the same
  setting. Finally, we propose a simple adaptation of our
  algorithms to account for the imperfection of the channel state
  information at the transmitter. Numerical results demonstrate 
  that the general RS scheme provides a substantial performance gain as compared to
  state-of-the-art linear precoding schemes, especially with a
  moderately large number of users. 
\end{abstract}

\newcommand{\Comment}[1]{\textbf{\color{red} #1}}

\section{Introduction}	

While the first version of the fifth generation~(5G) has been recently
deployed, many communication requirements for future applications, e.g., exceptionally high
bit rates and high energy efficiency, remain unaddressed. A plethora of
new multi-antenna~(MIMO) transmission techniques, such as cell-free
massive MIMO, hybrid beamforming, lens antenna arrays, and large
intelligent surface~(LIS), have been recently proposed for that purpose.
Nevertheless, even with these new techniques, interference is still the 
fundamental barrier towards a better performance in a wireless MIMO
network, especially at downlink \cite{bogale2016massive, zhu2017hybrid}.

Implementing MIMO downlink is challenging for several reasons. First, to
mitigate interference at the receivers' side, precoding that relies on
precise channel state information at the transmitter's side~(CSIT)
is needed. Such information is hard to obtain especially at high mobility. Second,
even with perfect CSIT, precoder design is non-trivial. The optimal
precoder that achieves the capacity region is known as dirty paper
coding~(DPC)~\cite{Caire:2003, MIMOBC2006} that is non-linear.
Implementing DPC requires vector quantization that is NP-hard.
Furthermore, it is
also well known that DPC is quite sensitive to CSIT
accuracy~\cite{yang2005impact}.  As such, in current systems, linear
precoders such as zero forcing~(ZF) are used instead.  It is well
established that ZF achieves the optimum degree of freedom (DoF) at high
signal-to-noise (SNR) ratio~\cite{yoo2006optimality,lee2007high}.  However,
the authors in \cite{li2018linearly} show that, despite its DoF
optimality, \emph{any} linear precoding scheme can be far from optimal,
since the gap between the achievable sum rate of the best linear scheme
and the sum capacity can be unbounded. Indeed, with such linear schemes,
we deal with interference in essentially two ways: 1)~the transmitter
applies \emph{interference
avoidance} by steering the signal of any user into other users' null space, and 2)~the
receivers \emph{treat interference as noise}. In other words, linear precoders are
designed such that interference power is minimized as compared to the
signal power at the receivers' side. The cost to suppress interference
can be high when the channels for some of the users are spatially aligned. 

To circumvent such limitation, the idea of rate splitting~(RS) is
basically to introduce a new option to the receivers: \emph{interference
decoding}. Specifically, each individual message is split into private
and common parts, which are respectively encoded and carried by different
signals. Each common part is decodable by~(though not necessarily
intended to) multiple receivers. Each receiver can decode and then remove
the common part before decoding the private part. In this way, part of
the interference has been removed since it is decodable, which improves
the overall performance. Originally, RS is proposed to partially
mitigate interference in the two-user interference channels~\cite{carleial1978interference, HK:1981}, in which independent messages are sent by independent transmitters to their respective receivers.
 It turns out that such a scheme achieves the capacity region of the two-user interference
channel to within one bit per channel use~(PCU)~\cite{etkin2008gaussian}. 
In \cite{yang2013degrees}, RS is applied to the multi-antenna
broadcast channel (BC) and shown to provide a strict sum DoF gain of a BC when
only imperfect CSIT is available. Recently, in \cite{li2018linearly},
the authors establish the optimality of linearly precoded RS in the
constant gap sense in the two-user MIMO BC case. 

In its most general form, the RS scheme can split each message into as
many as $2^{K-1}$ sub-messages in a $K$-user channel.  Then, the total $2^{K-1} K$ sub-messages are re-assembled into $2^K-1$ new
messages. The BS creates one directional signal for each re-assembled message, and we also refer to such signals as streams. Precoder design together with power
allocation can be done across all the sub-messages. Such great
flexibility also comes at the cost of high complexity, for both
the precoder design and the decoder implementation. The goal of this work is 
therefore to investigate the true potential of RS, and to solve some
technical challenges towards its practical implementation. Our main
contributions can be summarized into the following two items. 
\subsubsection*{\underline{Precoder design}}
We consider RS in
its most general form, i.e., with an arbitrary number $K$ of
users and an arbitrary subset of active streams. To explore
the full potential of RS, we consider joint decoding of all common
messages at each receiver. We formulate the precoder design problems to optimize the commonly used performance
metrics, such as the weighted sum rate, the worst-user rate, as well
as the transmit power~(for given target rates).  These problems are non-convex
and therefore hard to solve in general. Then, building on the
concave-convex procedure~(CCCP)~\cite{sun2017MM}, we propose algorithms to solve approximately the original
problems. Our algorithms can be proved to converge to stationary points
of the precoder design problems. 
To the best of our knowledge, this is the first work to provide
precoder design for the general RS scheme, as well as the first
work to combine RS and joint decoding. By constrast, previous works only
consider RS in reduced forms or with successive decoding. 

\subsubsection*{\underline{Practical implementation}} In addition to the
general precoder design algorithms, we also propose
further adaptations towards practical implementation of the RS scheme.
\begin{itemize}
  \item To reduce the complexity on the precoder design and the
    decoding, we propose a new stream elimination algorithm which is
    then combined with the precoder design algorithm. The remaining
    streams are such that the searching space of the decoding order is essentially reduced. With such an adaptation, the general RS scheme can be
    applied even for  a large number of users.
    Comparison among different algorithms reveals the substantial
    complexity reduction from the proposed stream selection algorithm.
  \item We propose a slight modification of the precoder design
    algorithms to account for the CSIT imperfection. Specifically,
    instead of reformulating entirely the problem, we introduce a
    regularization term in the 
    precoder design formulation according to the CSIT
    accuracy. Numerical results show that the proposed
    regularization is quite effective and can improve significantly the sum rate
    with imperfect CSIT. 
\end{itemize}

In order to validate the proposed algorithms, we have run numerical
simulations and compare the performance to existing schemes.  We show
that the general RS scheme outperforms substantially state-of-the-art
linear precoding schemes, especially with a moderately large number of
users~(e.g.,~8), both in terms of achievable rates and of total transmit
power. 

\subsubsection*{Related works}
In \cite{clerckx2019rate, joudeh2016robust,mao2018rate, yang2019optimization}, the authors
explore a structured and simplified version of RS, i.e., the 1-layer RS,
where each message is split into one private part and only one common
part. The common parts of all the messages are encoded into one
common stream that should be decoded by all the users, whereas the
private parts are unicast to the corresponding receivers.  
While the optimization problem in the 1-layer RS is simpler than the
general case, it does not take full advantage of the flexibility of the general RS.
Thus, the potential of the general RS remains unknown. 
Although the authors in \cite{mao2018rate} do mention the general RS scheme,
they only tackle the sum rate maximization problem for $K=2, 3$ users.
In fact, their method does not seem to scale with $K$, while our
formulation applies to an arbitrary number of users and an arbitrary
subset of streams. In \cite{dai2016rate}, the authors propose a hierarchical RS, which transmits one outer common message and multiple inner common messages. The outer common message can be decoded by all users while each inner common message is decodable by a subset of users. However, the authors mainly focus on the asymptotic sum rate analyses in massive MIMO systems and  the optimization of precoders of the common messages. 
Power allocation in \cite{dai2016rate} is also simplified to equal power
allocation among private messages, whereas in our work we optimize the
power allocation among all messages.  
Further, it is worth
mentioning that these works consider only successive decoding
while we consider both joint decoding and successive decoding in our
formulation to explore the full potential of the general RS. 
In \cite{li2018linearly}, the authors consider the
general $K$-user RS scheme with joint decoding, and show that MMSE
precoder achieves constant-gap capacity in the two-user MIMO BC case. They also
propose a stream elimination algorithm based on constant-gap argument. 
However, the constant-gap argument is essentially for the high SNR regime, and the important
questions of how to design precoders at finite SNR regime and how to
implement successive decoding have not been addressed.  

In~\cite{dai2016rate} and~\cite{joudeh2016partial}, RS is considered to
maximize the sum rate with imperfect CSIT. Specifically,
\cite{dai2016rate} considers a hierarchy RS and \cite{joudeh2016partial}
studies the 1-layer RS. To the best of our knowledge, no current
reference exploits the precoder design of the general RS with imperfect
CSIT. Furthermore, \cite{dai2016rate} directly assumes regularized-ZF
based on the estimated CSI as the precoder of the private messages, and
considers only the optimization of the precoders of the common messages.
In  \cite{joudeh2016partial}, the authors mainly focus on the DoF
derivation where the power goes to infinity. In terms of rate
optimization, \cite{joudeh2016partial} proposes an algorithm only based
on several samples of the channel estimation error. Such
state-of-the-art results can be far from the performance of the general
RS with imperfect CSIT. By contrast, in this paper, we propose a
simple and effective regularization to account for CSIT imperfection,
without changing the main precoder design.

The remainder of the paper is organized as follows. In Section \ref{sec:model}, we 
present the channel model, and describe the general RS strategy and its corresponding achievable rate region under joint decoding. Optimization under the general RS for joint decoding is considered in Section~\ref{sec:op_JD}. Optimization under successive decoding, stream selection and adaptation of our algorithms to imperfect CSIT scenario are presented in Section \ref{sec:JD_reduced}. Simulation results are illustrated and discussed in Section \ref{sec:simu}. Finally, the paper is concluded in Section \ref{sec:con}.

\subsubsection*{Notation}
For random quantities, we use upper case non-italic letters, e.g.,
$\rv{X}$, for scalars, upper case non-italic bold letters, e.g.,
$\rvVec{V}$, for vectors, and upper case letter with bold and sans serif
fonts, e.g., $\rvMat{M}$, for matrices. Deterministic quantities are
denoted in a rather conventional way with italic letters, e.g., a scalar
$x$, a vector $\pmb{v}$, and a matrix $\pmb{M}$. We denote $\Mm^T$,
$\Mm^H$ and $\trace(\Mm)$  as the transpose, the conjugate transpose and
the trace of a matrix $\Mm$, respectively. Sets are denoted with
calligraphic capitalized letters, e.g., $\Kc$, and $|\Kc|$ represents
the cardinality of a set $\Kc$. We also use bold calligraphic letters to
specify sets of sets, e.g., $\boldsymbol{\Kc}$. $[n]$ is the set
$\{1,\ldots,n\}$. Logarithms are to the base~$2$. We use
$\lfloor x \rfloor$ to denote the largest integer that is not larger than
$x$.

\section{System Model and Problem Formulation}\label{sec:model}
We consider a single-cell downlink communication system, where the base
station~(BS) with $M$ antennas serves $K$ single-antenna users. The
mathematical model of the communication channel during a transmission of
$T$ symbols is described as follows.
Let $\xv[t] \in \mathbb{C}^{M\times 1}$ be the transmitted signal from
the BS at time $t\in[T]$. The channel output at user $k\in[K]$ is
\begin{align}
\rv{Y}_k[t] = \hv_k^H \xv[t] +\rv{Z}_k[t],
\end{align}
where $\hv_k\in \mathbb{C}^{M\times 1}$ is the channel vector from the
BS to user~$k$; $\rv{Z}_k[t]\sim\mathcal{CN}(0,1)$ is the additive white
Gaussian noise (AWGN) with normalized variance and is independent over
time. Note that we assume that the channel remains constant during
the whole transmission.  

The goal of the BS is to transmit $K$ independent messages,
$\rv{M}_1,\ldots,\rv{M}_K$, to the $K$ users, respectively, in $T$
channel uses. Let $\rv{M}_k\in\mathcal{M}_k$, then the transmission rate
for user~$k$ is defined as $R_k = \frac{\log|\mathcal{M}_k|}{T}$ bits
per channel use~(PCU). An encoding scheme maps the $K$
messages into a sequence of $T$ symbols, say,
$\xv[1],\ldots,\xv[T]$. A rate tuple $(R_1,\ldots,R_K)$
is achievable under power constraint $P$ if there exists an encoding scheme such that the $K$
messages can be decoded at each receiver with an arbitrarily small error
probability, and that the transmitted signal satisfies
\begin{align}
\frac{1}{T}\sum_{t=1}^T \|\xv[t]\|^2 \le P, \label{eq:power_constraint}
\end{align}%
when $T\to\infty$. Unless specified otherwise, we assume that the
channel realizations $\left\{\hv_k\right\}_k$ are perfectly known at the
BS and at the receivers. 

\subsection{Linear Precoding for Unicast}

In the commonly used unicast scheme, $K$ messages are encoded separately, and a linear
	superposition of the $K$ encoded signals is sent. Specifically, the BS transmits  
\begin{align}
\rvVec{X} = \sum\nolimits_{k\in[K]} \rvVec{X}_k,\label{eq:conventional}
\end{align}
where $\rvVec{X}_k\in\mathbb{C}^{M\times 1}$ is the encoded signal for message $k$. Here, we omit the time index and adopt the
commonly used \emph{single-letter} expression where the signals $\{\xv[t]\}$ are
replaced by the random vector $\rvVec{X}$ for further analysis. We define the
covariance matrix $\Qm_k:=\E\{\rvVec{X}_k\rvVec{X}_k^H\}\succeq 0$ to specify the
precoder for signal $k$.
Accordingly, the transmit power becomes $\sum_{k\in[K]} \trace\left( \Qm_k \right)$. We call such a scheme linear precoding for unicast. The received signal at user $k$ is
\begin{align}
\rv{Y}_k=\underbrace{\hv_k^H\rvVec{X}_k}_{\text{desired signal}}+\underbrace {\sum\nolimits_{i\neq k}\hv_k^H\rvVec{X}_{i}}_{\text{interference}}+\,\rv{Z}_k,\quad \forall k\in[K].\label{eq:unicast}
\end{align}
Assuming Gaussian signaling, we can derive the achievable rate of user
$k$ as 
\begin{align}
R_k^\text{unicast}=\log\left(1+\frac{\hv_k^H\Qm_k\hv_k}{1+\sum\nolimits_{i\neq k}\hv_k^H\Qm_i\hv_k}\right),\quad \forall k\in[K],\label{eq:unicast_SINR}
\end{align}
where the fraction in the logarithm is referred to as the
signal-to-interference-and-noise-ratio~(SINR) of user $k$. Essentially,
interference is treated as noise in this scheme. One can then optimize
over the precoder, via $\{\Qm_k\}_k$, for different performance
metrics and requirements~\cite{tervo2015optimal,vucic2008robust}. 

\subsection{Linearly Precoded Rate-splitting}\label{sec:RS}

For each $k\in[K]$, let us define the following collection of subsets
of $[K]$
\begin{equation}
\collectionK^{(k)}:= \{ \mathcal{K}\subseteq [K]:\, k\in\mathcal{K}\},\quad \forall k\in[K].
\label{eq:collectionK}
\end{equation}
Namely, $\collectionK^{(k)}$ collects all $2^{K-1}$ subsets of $[K]$ that contain $k$. 
The linearly precoded rate-splitting scheme in the most general form is described as
follows. 

First, we split each message set $\mathcal{M}_k$ into sub-message
sets, such that 
\begin{equation}
\mathcal{M}_k = \prod_{\mathcal{K} \in \collectionK^{(k)}}
\mathcal{M}_{\mathcal{K}}^{(k)},
\end{equation}
where the right-hand side is the Cartesian product of $2^{K-1}$ sets. Thus, any message
$\rv{M}_k \in \mathcal{M}_k$ can be equivalently represented by a
sub-message
tuple~$\bigl(\rv{M}_{\mathcal{K}}^{(k)}:\,{\mathcal{K}\in\collectionK^{(k)}}\bigr)$
where $\rv{M}_{\mathcal{K}}^{(k)} \in
\mathcal{M}_{\mathcal{K}}^{(k)}$, $\mathcal{K} \in \collectionK^{(k)}$.
The rate $R_k$ is split into the rates
$\bigl(R_{\mathcal{K}}^{(k)}:\,{\mathcal{K}\in\collectionK^{(k)}}\bigr)$ of the
sub-messages such that 
\begin{align}
R_k = \sum_{\Kc\in\collectionK^{(k)}} {R}^{(k)}_{\mathcal{K}},\quad \forall \,k\in[K].\label{eq:rs_def}
\end{align}

Then, for each non-empty subset $\Kc \subseteq [K]$, we re-assemble the sub-messages
$\rv{M}_\Kc^{(k)}$, $k\in \Kc$, to form a message vector
\begin{equation}
{\rvVec{M}}_{\mathcal{K}} :=
\bigl({\rv{M}}^{(k)}_{\mathcal{K}}:\,k\in\mathcal{K}\bigr),\quad\forall\Kc\subseteq [K],
\end{equation}
with rate 
\begin{align}
{R}_{\mathcal{K}}=\sum_{k \in \mathcal{K}} {R}^{(k)}_{\mathcal{K}}, \quad\forall\Kc\subseteq [K].
\end{align}
That is, we re-arrange the $K 2^{K-1}$ sub-messages $\bigl({\rv{M}}^{(k)}_{\mathcal{K}}:\,k\in[K],  \mathcal{K} \in \collectionK^{(k)}\bigr)$ into $2^K-1$ message
vectors $\bigl(\rvVec{M}_{\mathcal{K}}: \Kc\subseteq [K]\bigr)$. Each of the $2^K-1$ message vectors $\rvVec{M}_\Kc$ is encoded
separately\footnote{Since each of the sub-messages in $\rvVec{M}_\Kc$ is required to be decoded by the users in $\Kc$, $\rvVec{M}_\Kc$ is effectively an integral message to the users in $\Kc$. The re-assembling at the level of information bits is thus without loss of optimality.} and carried by some signal
$\rvVec{X}_{\Kc}\in\mathbb{C}^{M\times 1}$ for which the precoder is
specified by the covariance matrix
$\Qm_{\Kc}:=\E\{\rvVec{X}_{\Kc}\rvVec{X}_{\Kc}^H\}$ satisfying
\begin{align}
\Qm_{\Kc}\succeq 0,\quad \forall\Kc\subseteq[K]. \label{eq:con_PSD} 
\end{align}

The transmitted signal is a linear superposition of all these signals
\begin{align}
\rvVec{X}=\sum\nolimits_{\Kc\subseteq[K]} \rvVec{X}_{\Kc}.
\end{align}%
At the receivers' side, each user~$k$ observes
\begin{align}
\rv{Y}_k=\underbrace{\sum_{\Kc\in\collectionK^{(k)}}\hv_k^H\rvVec{X}_{\Kc}}_{\text{desired
		signal}} + \underbrace
{\sum_{{\Kc'}\not\in{\collectionK}^{(k)}}\hv_k^H\rvVec{X}_{{\Kc'}}}_{\text{interference}}+\,\rv{Z}_k,\quad \forall k\in[K],\label{eq:RS_receiver}
\end{align}
and decodes all message vectors $\{\rvVec{M}_{\Kc}:\,
\Kc\in\collectionK^{(k)}\}$, by treating the
interference
$\sum_{{\Kc'}\not\in{\collectionK}^{(k)}}\hv_k^H\rvVec{X}_{{\Kc'}}$
as noise. Since the desired sub-messages $\{\rv{M}_{\Kc}^{(k)}:\,
\Kc\in\collectionK^{(k)}\}$ can be extracted from the message vectors
$\{\rvVec{M}_{\Kc}:\,\Kc\in\collectionK^{(k)}\}$, user~$k$ can recover the
original message $\rv{M}_k$. Note that with RS, each user decodes not
only the desired message, but also part of the messages of the other users.
Hence, the main idea of RS is to make the interfering messages partially
decodable in order to reduce the interference level.  

To analyze the achievable rate, we notice that with RS each
user is equivalent to a receiver in a multiple access channel~(MAC) in
which a number of independent messages must be decoded from the received
signal.  As in a MAC, the achievable rate of RS depends on
how the message vectors are decoded. To exploit the full potential of the general RS, we first consider joint decoding.   
In such case, each receiver $k$ jointly decodes the set of messages
$\{\rvVec{M}_{\mathcal{K}}:\,\Kc\in\collectionK^{(k)}\}$.
Then the achievable rate region of the message vectors is described by 
the following constraints \cite{li2018linearly}
\begin{align}
\sum_{\mathcal{K}\in \collectionS^{(k)}} {R}_{\mathcal{K}} \le
\log\biggl(1 + \frac{\sum_{\mathcal{K}\in \collectionS^{(k)}} \hv_k^\H
	\Qm_{\mathcal{K}} \hv_k}{
	1 + \sum_{\mathcal{K}'\not\in {\collectionK}^{(k)}}
	\hv_k^\H \Qm_{\mathcal{K}'} \hv_k}
\biggr),\quad\forall\, k\in[K], \collectionS^{(k)} \subseteq
\collectionK^{(k)}.\label{eq:region_JD}
\end{align}%

\subsection{Performance Metrics}

In this work, we are interested in performance metrics that are related to the
achievable rate tuple $(R_1,\ldots,R_K)$. For simplicity, we
consider the following utility functionals of the rate tuple. 
\begin{align}
f_m(R_1,\ldots,R_K)=\begin{cases}
\sum\limits_{k\in[K]}R_k, & m=\text{SR},\\
\sum\limits_{k\in[K]}w_kR_k, &m=\text{WSR},\\
\min\limits_{k\in[K]} R_k, &m=\text{WUR}, 
\end{cases}\label{eq:utility_function}
\end{align}
where the coefficient $w_k\ge0$ denotes the weight for user $k$. Here,
$f_{\text{SR}}$, $f_{\text{WSR}}$, and $f_{\text{WUR}}$ represent the
sum rate, the weighted sum rate, and the worst-user rate, respectively. 

We mainly focus on the precoder design problem such that one of the
above rate functions are maximized. Specifically, we shall maximize
these functions respectively over the $2^K-1$ covariance matrices subject to the transmit power
constraint in Section \ref{sec:RS_op_JD}. Another way to the precoder design is to minimize the transmit power for a
given target rate tuple as shown in Section \ref{sec:PM_JD}. We shall show that these problems can be solved using the same optimization method by applying the same transformation on the constraint functions.

It is worth noting that although the sum rate optimization problem is a
special case of the weighted sum rate problem, the optimization
technique and complexity can be very different. That is why we separate
the sum rate problem from the general weighted sum rate problem.

\section{Optimal Precoder Design}\label{sec:op_JD}
In this section, we investigate optimization problems for precoder
design for the general RS scheme under joint decoding. We first consider the rate maximization
problems and then the power minimization problem. 
For convenience, we introduce the following notations on the rates of
the sub-messages $\pmb{R} :=
\left({R}^{(k)}_{\Kc}\right)_{k\in\Kc,\Kc\subseteq[K]}$, the
rates of the re-assembled messages $\tilde{\pmb{R}} :=
\left({R}_{\Kc}\right)_{\Kc\subseteq[K]}$, and the covariance matrices $\pmb{Q} := \left(\Qm_{\Kc}\right)_{\Kc\subseteq[K]}$.

\subsection{Rate Maximization}\label{sec:RS_op_JD}
The rate maximization problems have been widely considered in wireless
communications. For example, such problems have been studied for BC in a
variety of scenarios, i.e., downlink unicast~\cite{unicastBC2008,
unicastBC2011}, downlink multicast~\cite{multicastBC2014}, and multi-group multicast~\cite{multigroupmulticastBC2011}. Consider the following transmit power constraint
\begin{align}
&\sum_{\Kc\subseteq [K]} \trace\left( \Qm_{\Kc} \right) \le P,\label{eq:RSpowerconstraint}
\end{align}
where $P$ is the power budget.
We would like to maximize the utility functions of the rate tuple, $f_m(R_1,\ldots,R_K)$, $m=\text{SR},\text{WSR},\text{WUR}$, subject to the rate constraints in~\eqref{eq:region_JD} and the constraints on the covariance matrices in~\eqref{eq:con_PSD} and~\eqref{eq:RSpowerconstraint}. 
Specifically, we formulate the following general rate maximization  problem
	\begin{align*}
          \Pc_m^{\text{JD}}:\qquad\max_{\pmb{Q},\pmb{R}}
          \quad &f_m(R_1,\ldots,R_K)\\
	\text{s.t.} \quad&\eqref{eq:con_PSD},\;\eqref{eq:region_JD},\;\eqref{eq:RSpowerconstraint},
	\end{align*}
        where $f_m(\cdot)$ is given by \eqref{eq:utility_function},
        $R_k$ in $f_m(\cdot)$ is given by $R_k=\sum_{\Kc\in\collectionK^{(k)}} {R}^{(k)}_{\mathcal{K}}$
        and $R_{\mathcal{K}}$ in \eqref{eq:region_JD} is given by $R_{\mathcal{K}}=\sum_{k\in\mathcal{K}} {R}^{(k)}_{\mathcal{K}}$.
$\Pc_m^{\text{JD}}$ is a nonconvex problem with $M^2(2^K-1)+K 2^{K-1}$ variables and $K
\left(2^{2^{K-1}}-1\right)+2^K$ constraints.
        

Note that $f_m(R_1,\ldots,R_K)$ is concave, the constraints in~\eqref{eq:con_PSD} are convex, and the constraint in~\eqref{eq:RSpowerconstraint} is linear. In addition, \eqref{eq:region_JD} can be rewritten as
\begin{align}
&\underbrace{\sum_{\mathcal{K}\in \collectionS^{(k)}} {R}_{\mathcal{K}} -
\log\biggl(1 +\!\!\!\!\!\sum_{\mathcal{K}'\notin
\collectionK^{(k)}}\!\!\hv_k^\H\Qm_{\mathcal{K}'}\hv_k+\!\!\sum_{\mathcal{K}\in
\collectionS^{(k)}} \!\!\hv_k^\H\Qm_{\mathcal{K}}\hv_k
\biggr)}_{\text{convex}}+\underbrace{\log\biggl(1 + \!\!\!\!\sum_{\mathcal{K}'\notin
\collectionK^{(k)}} \!\!\hv_k^\H\Qm_{\mathcal{K}'}\hv_k\biggr)
}_{\text{concave}} \le 0,\nonumber\\
&\hspace{10cm}\forall\, k\in[K], \,\collectionS^{(k)} \subseteq
\collectionK^{(k)}.\label{eq:major_mini}
\end{align}
Note that each constraint function in~\eqref{eq:major_mini} can be regarded as a difference of two convex functions. Therefore, $\Pc_m^{\text{JD}}$ is a difference of convex functions (DC) programming. A stationary point of $\Pc_m^{\text{JD}}$ can be obtained by  CCCP~\cite{sun2017MM}. The main idea is to solve a sequence of successively refined approximate convex problems, each of which is obtained by linearizing the concave part in~\eqref{eq:major_mini} and preserving the
remaining convexity of $\Pc_m^{\text{JD}}$.
Specifically, at the $i$-th iteration, the derivative of the concave term $\log\biggl(1 + \sum_{\mathcal{K}'\notin
\collectionK^{(k)}} \hv_k^\H\Qm_{\mathcal{K}'}\hv_k\biggr)$ at $\Qm(i-1)$ is given by
\begin{align}
\frac{\partial\log\biggl(1 +\sum_{\mathcal{K}'\notin \collectionK^{(k)}} \!\hv_k^\H\Qm_{\mathcal{K}'}\hv_k\biggr)}{\partial \Qm_{{\Kc}}}\bigg|_{\Qm=\Qm(i-1)}=\frac{\hv_k\hv_k^\H}{\biggl(1 +  \sum_{\mathcal{K}'\notin \collectionK^{(k)}} \!\hv_k^\H\Qm_{\mathcal{K}'}\hv_k\biggr)\ln (2)}, \;\forall\,{{\Kc}}\notin \collectionK^{(k)},\nonumber
\end{align}
where $\pmb{Q}(i-1)$ denotes the optimal solution of the approximate
convex problem at the $(i-1)$-th iteration. Therefore, we can linearize the concave term in \eqref{eq:major_mini} at $\pmb{Q}(i-1)$ as follows
\begin{align}
L_{k}(\Qm;\pmb{Q}(i-1))=\log\biggl(1 +\!\!\sum_{\mathcal{K}'\notin \collectionK^{(k)}} \!\hv_k^\H\Qm_{\mathcal{K}'}(i-1)\hv_k\biggr)+&\frac{\sum_{\mathcal{K}'\notin \collectionK^{(k)}} \hv_k^\H\Big(\Qm_{\mathcal{K}'}-\Qm_{\mathcal{K}'}(i-1)\Big)\hv_k
}{\biggl(1 +\sum_{\mathcal{K}'\notin \collectionK^{(k)}} \hv_k^\H\Qm_{\mathcal{K}'}(i-1)\hv_k\biggr)\ln 2},\nonumber\\
&\forall \,k\in[K].\label{eq:linear1}
\end{align}
 In the following, we shall provide the details of the CCCP for obtaining a stationary point of $\Pc_m^{\text{JD}}$ for $m=\text{SR},\text{WSR},\text{WUR}$, respectively.

First, consider $m=\text{SR}$. Since
\begin{align}
 f_{\text{SR}}(R_1,\ldots,R_K)=\sum_{k\in[K]} R_k = \sum_{k\in[K]}
  \sum_{\Kc\in\collectionK^{(k)}}
  {R}^{(k)}_{\Kc}
  =\sum_{\Kc\subseteq[K]}\sum_{k\in\Kc}
{R}^{(k)}_{\Kc}
 = \sum_{\Kc\subseteq[K]}{R}_{\mathcal{K}},
\end{align}
$\Pc_{\text{SR}}^{\text{JD}}$ can be simplified to
\begin{align}
  \Pc_{\text{SR}}^{\text{JD}}: \qquad\max_{\pmb{Q},\tilde{\pmb{R}}}\quad &\sum_{\Kc\subseteq[K]}{R}_{\Kc}\nonumber\\
  \text{s.t.}\quad&\eqref{eq:con_PSD},\;\eqref{eq:region_JD},\; \eqref{eq:RSpowerconstraint}.\nonumber
\end{align}
Note that the rates of the sub-messages $\pmb{R}$ do not appear in the simplified form of $\Pc_{\text{SR}}^{\text{JD}}$, and the number of variables is reduced from $M^2(2^K-1)+K 2^{K-1}$ to $(M^2+1)(2^K-1)$.
The approximate convex problem of $\Pc_{\text{SR}}^{\text{JD}}$ at the $i$-th iteration is given by
	\begin{align}
	\tilde\Pc_{\text{SR}}^{\text{JD}}(i): \quad\mathop{\max
        }\limits_{\pmb{Q},\tilde{\pmb{R}}}\quad &\sum_{\Kc\subseteq[K]}{R}_{\Kc}\nonumber\\
	\qquad\qquad\;\;\text{s.t.}\quad &\eqref{eq:con_PSD},\;\eqref{eq:RSpowerconstraint}, \nonumber\\
	\qquad\qquad&\sum_{\mathcal{K}\in \collectionS^{(k)}} {R}_{\mathcal{K}}-
	\log\biggl(1 +\!\!\!\!\! \sum_{\mathcal{K}'\notin \collectionK^{(k)}}\!\!\hv_k^\H\Qm_{\mathcal{K}'}\hv_k+\!\!\sum_{\mathcal{K}\in \collectionS^{(k)}} \!\!\hv_k^\H\Qm_{\mathcal{K}}\hv_k  \biggr)\nonumber\\
	&\qquad+L_{k}(\Qm;\pmb{Q}(i-1))\le 0,\quad \forall\,k\in[K],\, \collectionS^{(k)}\subseteq\collectionK^{(k)}.\label{eq:rate_linear}
	\end{align}


Next, consider $m=\text{WSR}$.
$\Pc_{\text{WSR}}^{\text{JD}}$ can be expressed as
	\begin{align}
	\Pc_{\text{WSR}}^{\text{JD}}:\quad\max_{\pmb{Q},\pmb{R}} &\sum_{k\in[K]}w_k\sum_{\Kc\in\collectionK^{(k)}}{R}^{(k)}_{\Kc}\nonumber\\
	\qquad\quad\;\; \text{s.t.} \;&\eqref{eq:con_PSD},\;\eqref{eq:RSpowerconstraint}, \nonumber\\
	&\sum_{\mathcal{K}\in \collectionS^{(k)}} \sum_{k\in\mathcal{K}} {R}^{(k)}_{\mathcal{K}}-
	\log\biggl(1 +\!\!\!\sum_{\mathcal{K}'\notin \collectionK^{(k)}}\!\!\!\hv_k^\H\Qm_{\mathcal{K}'}\hv_k+\!\!\sum_{\mathcal{K}\in \collectionS^{(k)}} \!\!\hv_k^\H\Qm_{\mathcal{K}}\hv_k  \biggr)\nonumber\\
	&+\log\biggl(1 + \!\!\!\!\sum_{\mathcal{K}'\notin \collectionK^{(k)}} \!\!\!\hv_k^\H\Qm_{\mathcal{K}'}\hv_k\biggr)\le 0, \quad\forall\, k\in[K],\, \collectionS^{(k)}\subseteq\collectionK^{(k)}.\label{eq:rate_transform} 
	\end{align}
The approximate convex problem of $\Pc_{\text{WSR}}^{\text{JD}}$ at the $i$-th iteration is given by
	\begin{align}
	\tilde\Pc_{\text{WSR}}^{\text{JD}}(i):\quad\max_{\pmb{Q},\pmb{R}} &\sum_{k\in[K]}w_k\sum_{\Kc\in\collectionK^{(k)}}{R}^{(k)}_{\Kc}\nonumber\\
	\qquad\qquad\quad\;\, \text{s.t.} \; &\eqref{eq:con_PSD},\; \eqref{eq:RSpowerconstraint},\nonumber\\
&\sum_{\mathcal{K}\in \collectionS^{(k)}} \sum_{k\in\mathcal{K}} {R}^{(k)}_{\mathcal{K}}-
	\log\biggl(1 +\!\!\!\sum_{\mathcal{K}'\notin \collectionK^{(k)}}\!\!\!\hv_k^\H\Qm_{\mathcal{K}'}\hv_k+\!\!\sum_{\mathcal{K}\in \collectionS^{(k)}} \!\!\hv_k^\H\Qm_{\mathcal{K}}\hv_k  \biggr)\nonumber\\
&\quad\qquad+L_{k}(\Qm;\pmb{Q}(i-1))\le 0,\qquad \forall\,k\in[K],\, \collectionS^{(k)}\subseteq\collectionK^{(k)}. \label{eq:rate_transform_linear}
\end{align}

Then, consider $m=\text{WUR}$. We introduce an extra slack variable $y$ which serves as a lower bound of $\min\limits_{k\in[K]} R_k=\min\limits_{k\in[K]} \sum_{\Kc\in\collectionK^{(k)}}{R}^{(k)}_{\Kc}$, i.e.,
\begin{align}
y \le \sum_{\Kc\in\collectionK^{(k)}}{R}^{(k)}_{\Kc},\quad \forall \, k\in [K]. \label{eq:y_cons}
\end{align}
Thus, $\Pc_{\text{WUR}}^{\text{JD}}$ can be equivalently transformed to the following problem
	\begin{align*}
	\Pc_{\text{WUR}}^{\text{JD}}:\qquad\quad\max_{\pmb{Q},\pmb{R},y}\ &y \\
	\text{s.t.} \quad& \eqref{eq:con_PSD},\;\eqref{eq:RSpowerconstraint},\;\eqref{eq:rate_transform},\;\eqref{eq:y_cons}. 
	\end{align*}
The number of variables in $\Pc_{\text{WUR}}^{\text{JD}}$ becomes $M^2(2^K-1)+K 2^{K-1}+1$.
The approximate convex problem of $\Pc_{\text{WSR}}^{\text{JD}}$ at the $i$-th iteration is given by
	\begin{align*}
	\tilde\Pc_{\text{WUR}}^{\text{JD}}(i):\qquad\quad\max_{\pmb{Q},\pmb{R},y}\ &y\\
	\text{s.t.} \quad& \eqref{eq:con_PSD},\;\eqref{eq:RSpowerconstraint},\;\eqref{eq:rate_transform_linear},\;\eqref{eq:y_cons}. 
	\end{align*}

Finally, the details of the CCCP for obtaining a stationary point of
$\Pc_{m}^{\text{JD}}$, for $m=\text{SR},\text{WSR},\text{WUR}$,
respectively, are summarized in Algorithm~\ref{algo:max_sr} in which 
$J_m(i)$ is given by
\begin{align}
J_m(i)\triangleq\begin{cases}
\left\Vert(\pmb{Q}(i),\tilde{\pmb{R}}(i))-(\pmb{Q}(i-1),\tilde{\pmb{R}}(i-1))\right\Vert_2, & m=\text{SR},\\
\left\Vert \left(\pmb{Q}(i),\pmb{R}(i)\right)-\left(\pmb{Q}(i-1),\pmb{R}(i-1)\right)\right\Vert_2, &m=\text{WSR},\\
\left\Vert \left(\pmb{Q}(i),\pmb{R}(i),y(i)\right)-\left(\pmb{Q}(i-1),\pmb{R}(i-1),y(i-1)\right)\right\Vert_2, &m=\text{WUR}.
\end{cases}\label{eq:convergence criterion}
\end{align}
\begin{algorithm}[h]
	\caption{Obtaining A Stationary Point of $\Pc_{m}^{\text{JD}}$}
	\label{algo:max_sr}
	\begin{algorithmic}[1]
		\STATE {Choose any feasible covariance matrices $\pmb{Q}(0)$ of $\Pc_{m}^{\text{JD}}$, and set $i=1$.} 
		\REPEAT 
		\STATE Obtain an optimal solution of $\tilde{\Pc}_{m}^{\text{JD}}(i)$, denoted by $\{(\pmb{Q}(i),\tilde{\pmb{R}}(i))\}$, $\{(\pmb{Q}(i),\pmb{R}(i))\}$ and $\{(\pmb{Q}(i),\pmb{R}(i),y(i))\}$ for $m=\text{SR},\text{WSR},\text{WUR}$, respectively, with an interior point method.
		\STATE Set $i=i+1$.
		\UNTIL{{ the convergence criterion $J_m(i)\leq\epsilon$ is met.}} 
	\end{algorithmic}
\end{algorithm}
\begin{claim}[Convergence of Algorithm \ref{algo:max_sr}]\label{lemma:converge}
	As $i\to\infty$, $\{(\pmb{Q}(i),\tilde{\pmb{R}}(i))\}$, $\{(\pmb{Q}(i),\pmb{R}(i))\}$ as well as $\{(\pmb{Q}(i),\pmb{R}(i),y(i))\}$ obtained by Algorithm \ref{algo:max_sr} converge to a stationary point\footnote{ 
	A stationary point is a point that satisfies the necessary optimality conditions (for example the KKT conditions) of a nonconvex optimization problem~\cite{sun2017MM}, and is the classic goal of designing iterative algorithms for solving a nonconvex optimization problem, as there is no effective method for solving a general nonconvex problem optimally.} of $\Pc_{m}^{\text{JD}}$ for $m=\text{SR},\text{WSR},\text{WUR}$, respectively.
\end{claim}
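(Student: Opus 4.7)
The plan is to verify that Algorithm~\ref{algo:max_sr} fits the classical CCCP/MM framework and then invoke the standard convergence theory. I would structure the argument in four steps: (i) show each subproblem $\tilde\Pc_m^{\text{JD}}(i)$ is a \emph{tight convex restriction} of $\Pc_m^{\text{JD}}$ at $\pmb{Q}(i-1)$; (ii) establish monotonicity of the objective sequence; (iii) obtain convergence of the objective values via a boundedness/compactness argument; and (iv) show that any limit point of the iterates satisfies the KKT conditions of the original nonconvex problem.

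First I would verify step (i). The term being linearized, $\log(1+\sum_{\mathcal{K}'\notin\collectionK^{(k)}}\hv_k^H\Qm_{\mathcal{K}'}\hv_k)$, is concave in $\pmb{Q}$ because it is the composition of the concave $\log(1+\cdot)$ with an affine map. Its first-order Taylor expansion $L_k(\Qm;\pmb{Q}(i-1))$ is therefore a global overestimator that agrees in both value and gradient with the original function at $\pmb{Q}(i-1)$. Substituting $L_k$ into \eqref{eq:major_mini} thus (a) turns each DC constraint into a convex constraint, (b) shrinks the feasible region of $\tilde\Pc_m^{\text{JD}}(i)$ inside that of $\Pc_m^{\text{JD}}$, and (c) keeps $\pmb{Q}(i-1)$ feasible for $\tilde\Pc_m^{\text{JD}}(i)$ whenever it was feasible for $\Pc_m^{\text{JD}}$.

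For step (ii), since $\pmb{Q}(i-1)$ together with the previous rate vector is feasible for $\tilde\Pc_m^{\text{JD}}(i)$ and the subproblem is solved to global optimality, its optimal value is at least the previous one. Because the three objectives ($\sum_{\Kc\subseteq[K]} R_{\Kc}$, the weighted sum of $R^{(k)}_\Kc$, and $y$) are linear in their respective rate variables, this yields a monotonically non-decreasing sequence of objective values. For step (iii), the power constraint \eqref{eq:RSpowerconstraint} keeps each $\Qm_\Kc$ in a compact subset of the PSD cone, which in turn forces each $R_\Kc$, each $R_k$, and the slack $y$ to lie below a finite bound depending only on $\{\hv_k\}$ and $P$. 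A monotone bounded sequence converges, so the objective values converge; by compactness one can extract a convergent subsequence of $\{(\pmb{Q}(i),\pmb{R}(i),\ldots)\}$ along which $J_m(i)\to 0$.

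For step (iv), a limit point $\pmb{Q}^\star$ satisfies the KKT conditions of the linearized subproblem to which it is an optimizer. Because $L_k$ and the true concave term share value and gradient at $\pmb{Q}^\star$, the KKT system of the linearized subproblem coincides with that of the original $\Pc_m^{\text{JD}}$, so every limit point is a stationary point of $\Pc_m^{\text{JD}}$. The hard part will be the technicalities that MM convergence proofs usually hide: verifying a constraint qualification (e.g.\ Slater's condition) so that KKT conditions are both necessary and transferable between $\tilde\Pc_m^{\text{JD}}(i)$ and $\Pc_m^{\text{JD}}$, and arguing that the interior-point solver indeed returns a global optimum of each convex subproblem. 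Slater's condition is not difficult in practice, since any strictly positive-definite $\pmb{Q}$ with a small enough total trace is strictly feasible for every inequality constraint. With this in hand, the general MM convergence result in \cite{sun2017MM} delivers the claim simultaneously for $m\in\{\text{SR},\text{WSR},\text{WUR}\}$.
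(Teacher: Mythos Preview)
Your proposal is correct and takes essentially the same approach as the paper: both identify $\Pc_m^{\text{JD}}$ as a DC program solved by CCCP and invoke the MM/CCCP convergence theory of \cite{sun2017MM}. The paper's own proof is in fact just a two-sentence citation of that result, whereas you have spelled out the standard tight-majorization/monotonicity/compactness/KKT-transfer argument that underlies it; your more detailed verification (in particular the Slater check and the observation that $L_k$ matches both value and gradient at the expansion point) is a strict superset of what the paper provides.
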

\begin{proof}
	We have shown that $\Pc_{m}^{\text{JD}}$ is a DC programming and we propose to solve it with CCCP. It has been validated in \cite{sun2017MM} that solving DC programming through CCCP always returns a stationary point.  
\end{proof}
Note that Algorithm \ref{algo:max_sr}, based on CCCP, usually converges faster than conventional gradient methods, as it exploits the partial concavity of $\Pc_{m}^{\text{JD}}$. 
By~\cite{facchinei2017ghost}, we know that the number of iterations of Algorithm \ref{algo:max_sr} does not scale with the problem size. Thus, the computational complexity order for Algorithm \ref{algo:max_sr} is the same as that for solving $\tilde\Pc_{m}^{\text{JD}}(i)$ in Step 3.
When an interior point method is applied, the computational complexity for solving $\tilde\Pc_{\text{SR}}^{\text{JD}}(i)$ is $\mathcal O\left(M^4K^{1.5}2^{0.75\times2^{K}+2K}\right)$, and the computational complexities for solving $\tilde\Pc_{\text{WSR}}^{\text{JD}}(i)$ and $\tilde\Pc_{\text{WUR}}^{\text{JD}}(i)$ are $\mathcal O\left((M^2+K)^2K^{1.5} 2^{0.75\times2^{K}+2K}\right)$~\cite{boyd2004convex}.


The initial value for $\pmb{Q}(0)$ can be chosen randomly (provided that feasibility is ensured) or through a heuristic method.
One of the possible initial values for the covariance matrices can be the ZF precoder or the MMSE precoder used in \cite{li2018linearly}. 
	In practice, we can run Algorithm \ref{algo:max_sr} multiple times with different feasible initial points $\pmb{Q}(0)$ to obtain multiple stationary points, and choose the stationary point with the best objective value as a suboptimal solution.



\subsection{Power Minimization}\label{sec:PM_JD}

Another relevant problem in wireless communications is power efficiency optimization,
i.e., to minimize the transmit power for a given target rate
tuple~$(r_1,\ldots,r_K)$. 
Such power minimization problem has been studied extensively for BC in a
variety of communication scenarios, i.e., downlink unicast
\cite{bengtsson2001optimal, schubert2004solution}, downlink multicast
\cite{sidiropoulos2006transmit} and multi-group multicast
\cite{karipidis2008quality}. Furthermore, precoder design that minimizes
the total power consumption while ensuring target user rates is also
studied in emerging scenarios such as large-scale multi-cell multi-user
MIMO systems \cite{sanguinetti2016large}, or in the presence of
eavesdroppers \cite{liao2011qos}. 
Considering the following rate constraints
	\begin{align}
	&\sum_{\Kc\in\collectionK^{(k)}} {R}_{\mathcal{K}}^{(k)}\ge r_k,\quad \forall\,k\in[K], \label{eq:pri_target}
	\end{align}
where $r_k$, $k\in[K]$, are the target rates. We would like to minimize the transmit power subject to the rate constraints in~\eqref{eq:region_JD} and~\eqref{eq:pri_target} and the constraints on the covariance matrices in \eqref{eq:con_PSD}. Specifically, we formulate the
power minimization problem for the general RS scheme as follows
	\begin{align*}
	\Pc_{\text{PM}}^{\text{JD}}:\quad\;\mathop {\min }\limits_{\pmb{Q},\pmb{R}} &\sum_{\Kc\subseteq[K]}\trace(\Qm_{\Kc})\\
	\text{s.t.} \quad &\;\;\eqref{eq:con_PSD},\;\eqref{eq:region_JD},\;  \eqref{eq:pri_target}.
	\end{align*}
	
As in the rate maximization problems presented previously, the above
power minimization problem can also be regarded as a DC programming. 
Hence, a stationary point of $\Pc_{\text{PM}}^{\text{JD}}$ can be obtained using CCCP. Specifically, at the $i$-th iteration, the approximate convex problem of $\Pc_{\text{PM}}^{\text{JD}}$ is given by
\begin{align*}
	\tilde{\Pc}_{\text{PM}}^{\text{JD}}(i):&\qquad\mathop {\min
        }\limits_{\pmb{Q},\pmb{R}} \quad \sum_{\Kc\subseteq[K]}\trace(\Qm_{\Kc})\\
	&\qquad\;\;\text{s.t.} \quad \eqref{eq:con_PSD},\;\eqref{eq:rate_transform_linear},\;\eqref{eq:pri_target}.
\end{align*}
The complete algorithm and its convergence proof are similar to those of $\Pc_{m}^{\text{JD}}$. Thus, we omit the details due to space limitation.


\section{Practical Considerations for Future Implementation}\label{sec:JD_reduced}
\newcommand{\collectionU}{\boldsymbol{\mathcal{U}}}
In the previous section, we have formulated the precoder design problems
for the general RS scheme with joint decoding, and proposed iterative algorithms for
their solutions. Nevertheless, there are a number of challenges for its
implementation for possible applications in future wireless networks.  

First, the number of streams $2^K-1$ can be large when $K$ is large,
which increases significantly the precoding and decoding complexity as
compared to the case with only $K$ private streams. Second, the number
of rate constraints for joint decoding can be as large as $K
\left(2^{2^{K-1}}-1\right)$ as one can verify from
\eqref{eq:region_JD}. The computational complexity of the proposed precoder design
algorithms can be formidable with a large $K$. Third, perfect CSIT may be
hard to obtain in practice. We should account for the CSIT inaccuracy in
our design. 

To address the above challenges, we propose the following solutions: 
\begin{itemize}
	\item Use successive decoding to reduce the decoding complexity at the receivers' side.
	\item Apply a stream selection algorithm to reduce the computational complexity at the BS and to further reduce the decoding complexity at the receivers' side.
	\item Adjust the current precoder design algorithms so that the CSIT
	error is taken into account. 
\end{itemize}

\subsection{Successive Decoding}
\label{sec:op_SIC}
In this subsection, we consider the precoder design problems with successive decoding. For $k\in[K]$, let us assume that the $2^{K-1}$ elements in
$\collectionK^{(k)}$ is somehow ordered such that the $n$-th element is
$\Kc^{(k)}_n$, $n\in[2^{K-1}]$. 
We define $\pmb{\pi}^{(k)} := \bigl(\pi_{k,1}, \ldots,
\pi_{k,2^{K-1}}\bigr)$ as the permutation vector of length
$2^{K-1}$, which is used to specify the decoding order at
user~$k$. 
Specifically, at round $n\in[2^{K-1}]$, each user $k$ decodes
stream~$\Kc^{(k)}_{\pi_{k,n}}$ by treating
streams~$\{\Kc^{(k)}_{\pi_{k,n'}}:\,n'>n\}$ as noise. For a given
decoding order $\pmb{\pi}^{(k)}$, the achievable
rate region is thus defined by the following constraints
\begin{align}
{R}_{\Kc^{(k)}_{\pi_{k,n}}} \le
\log\biggl(1+\frac{\hv_k^H\Qm_{\Kc^{(k)}_{\pi_{k,n}}}\hv_k}{1+\sum_{{\Kc'}\not\in
		\collectionK^{(k)}}\hv_k^H\Qm_{{\Kc'}}\hv_k+\sum_{n'>n}\hv_k^H\Qm_{\Kc^{(k)}_{\pi_{k,n'}}}\hv_k}\biggr),
\quad \forall\,k\in[K], n\in[2^{K-1}].\label{region_SIC}
\end{align}

Let us introduce the notation on the decoding orders $\pmb{\pi} := \left(\pmb{\pi}^{(k)}\right)_{k\in[K]}$. The rate maximization problem can be formulated as follows
\begin{align}
\Pc_m^{\text{SD}}:\quad\mathop {\max
}\limits_{\pmb{\Qm},\pmb{R}, \pmb{\pi}}\;&f_m(R_1,\ldots,R_K)\nonumber\\
\quad\text{s.t.}\quad  & \eqref{eq:con_PSD},\;\eqref{eq:RSpowerconstraint},\;\eqref{region_SIC}.\nonumber
\end{align}
$\Pc_m^{\text{SD}}$ is a challenging mixed discrete-continuous
optimization problem with $(M^2+1)(2^K-1)$ continuous variables
$\pmb{\Qm}$ and $\pmb{R}$, $(2^{K-1}!)^K$ possible values for the
discrete variable $\pmb{\pi}$ and $K\left(2^{K-1}\right)+2^K$
constraints. One straightforward way to solve $\Pc_m^{\text{SD}}$ is to
first solve the optimization with respect to $\pmb{\Qm}$ and $\pmb{R}$
for a given $\pmb{\pi}$, denoted by $\Pc_{m}^{\text{SD}}(\pmb{\pi})$, and then solve the optimization with respect to $\pmb{\pi}$ using exhaustive search over all $(2^{K-1}!)^K$ possible values for $\pmb{\pi}$. 


First, we solve $\Pc_{{m}}^{\text{SD}}(\pmb{\pi})$ for a given $\pmb{\pi}$ using CCCP.
To avoid redundancy, we present in detail the following sum rate maximization under successive decoding, optimization under the other two rate criteria can be followed analogously. The transmit power minimization can also be handled similarly. For a given
decoding order ${\pmb{\pi}}$, the sum rate optimization under successive decoding is formulated as
\begin{align}
\Pc_{\text{SR}}^{\text{SD}}(\pmb{\pi}): \qquad\mathop {\max }\limits_{\pmb{\Qm},\tilde{\pmb{R}} } &\sum_{\Kc\subseteq[K]}{R}_{\Kc}\nonumber\\
\qquad\qquad\quad \text{s.t.}\quad& \eqref{eq:con_PSD},\;\eqref{eq:RSpowerconstraint},\;\eqref{region_SIC}.\nonumber
\end{align}
$\Pc_{\text{SR}}^{\text{SD}}(\pmb{\pi})$ is a nonconvex problem with $(M^2+1)(2^K-1)$ variables and $K
\left(2^{K-1}\right)+2^K$ constraints.
Note that the objective function and the constraint in~\eqref{eq:RSpowerconstraint} are linear, and the constraints in~\eqref{eq:con_PSD} are convex. In addition, \eqref{region_SIC} can be rewritten as
\begin{align}
&\underbrace{{R}_{\Kc^{(k)}_{\pi_{k,n}}}-	\log\biggl(1+\hv_k^H\Qm_{\Kc^{(k)}_{\pi_{k,n}}}\hv_k+\sum_{{\Kc'}\not\in		\collectionK^{(k)}}\hv_k^H\Qm_{{\Kc'}}\hv_k+\sum_{n'>n}\hv_k^H\Qm_{\Kc^{(k)}_{\pi_{k,n'}}}\hv_k\biggr)}_{\text{convex}}\nonumber\\
&+\underbrace{\log\biggl(1 +\!\!\!\!\sum_{{\Kc'}\not\in \collectionK^{(k)}} \!\!\!\!\hv_k^\H\Qm_{\mathcal{K}'}\hv_k+\sum_{n'>n}\hv_k^H\Qm_{\Kc^{(k)}_{\pi_{k,n'}}}\hv_k\biggr)}_{\text{concave}} \le 0, \quad\forall\, k\in[K], \,n\in[2^{K-1}],\label{eq:major_mini_SIC}
\end{align}
and be regarded as a difference of two convex functions. Similarly to $\Pc_{\text{SR}}^{\text{JD}}$ in Section \ref{sec:RS_op_JD}, $\Pc_{\text{SR}}^{\text{SD}}(\pmb{\pi})$ is a DC programming and a stationary point of $\Pc_{\text{SR}}^{\text{SD}}(\pmb{\pi})$ can be obtained using CCCP. The approximate convex problem of $\Pc_{\text{SR}}^{\text{SD}}(\pmb{\pi})$ at the $i$-th iteration is given by
\begin{align}
\tilde{\Pc}_{\text{SR}}^{\text{SD}}(\pmb{\pi},i): \quad\mathop {\max }\limits_{\pmb{\Qm},\pmb{\tilde{R}}} &\sum_{\Kc\subseteq[K]}R_{\Kc}\nonumber\\
\qquad\qquad\quad \text{s.t.} \quad&\eqref{eq:con_PSD},\;\eqref{eq:RSpowerconstraint},\nonumber\\
&{R}_{\Kc^{(k)}_{\pi_{k,n}}}-
	\log\biggl(1+\hv_k^H\Qm_{\Kc^{(k)}_{\pi_{k,n}}}\hv_k+\sum_{{\Kc'}\not\in
		\collectionK^{(k)}}\hv_k^H\Qm_{{\Kc'}}\hv_k+\sum_{n'>n}\hv_k^H\Qm_{\Kc^{(k)}_{\pi_{k,n'}}}\hv_k\biggr)\nonumber\\
&\qquad\qquad+L_{k,n}^{\text{SD}}(\Qm;\pmb{Q}(i-1))\le 0,\quad  \forall\,k\in[K], n\in[2^{K-1}],\label{eq:rate_linear_SIC}
\end{align}
\noindent where $L_{k,n}^{\text{SD}}(\Qm;\pmb{Q}(i-1))$ corresponds to the linearization of the concave term in \eqref{eq:major_mini_SIC} at $\pmb{Q}(i-1)$, and is given by
\begin{align}
&L_{k,n}^{\text{SD}}(\Qm;\pmb{Q}(i-1))=\log\biggl(1 +\!\!\!\!\sum_{{\Kc'}\not\in
	\collectionK^{(k)}} \!\!\!\!\hv_k^\H\Qm_{\mathcal{K}'}(i-1)\hv_k+\sum_{n'>n}\hv_k^H\Qm_{\Kc^{(k)}_{\pi_{k,n'}}}(i-1)\hv_k\biggr)\nonumber\\
&\qquad+\frac{\sum_{{\Kc'}\not\in
		\collectionK^{(k)}} \hv_k^\H\biggl(\Qm_{\mathcal{K}'}-\Qm_{\mathcal{K}'}(i-1)\biggr)\hv_k+\sum_{n'>n}\hv_k^H\biggl(\Qm_{\Kc^{(k)}_{\pi_{k,n'}}}-\Qm_{\Kc^{(k)}_{\pi_{k,n'}}}(i-1)\biggr)\hv_k
}{\biggl(1 +\sum_{{\Kc'}\not\in
	\collectionK^{(k)}} \hv_k^\H\Qm_{\mathcal{K}'}(i-1)\hv_k+\sum_{n'>n}\hv_k^H\Qm_{\Kc^{(k)}_{\pi_{k,n'}}}(i-1)\hv_k\biggr)\ln (2)},\nonumber\\
&\qquad\qquad\qquad\forall\,k\in[K], n\in[2^{K-1}],\label{eq:linear_SIC}
\end{align}
with $\{\Qm(i-1)\}$ being the optimal solution of $\tilde\Pc_{\text{SR}}^{\text{SD}}(\pmb{\pi},i-1)$ at the $(i-1)$-th iteration.

Then, we can perform exhaustive search over all $(2^{K-1}!)^K$ possible decoding orders. The details are summarized in Algorithm \ref{algo:max_sr_SIC}.\footnote{Note that there is no guarantee that $\pi^\dagger$ is the optimal decoding order, as we can obtain only a stationary point of $\Pc_{\text{SR}}^{\text{SD}}(\pmb{\pi})$.}
\begin{algorithm}
		\caption{Solving $\Pc_{\text{SR}}^{\text{SD}}$ with Exhaustive Search}
		\label{algo:max_sr_SIC}
		\begin{algorithmic}[1]
			\STATE Set $R_{\text{SR}}^{\dagger}$=0.
			\FOR{all possible values for $\pmb{\pi}$}
			\STATE {Choose any feasible covariance matrices $\pmb{Q}(0)$ of $\Pc_{\text{SR}}^{\text{SD}}(\pmb{\pi})$, and set $i=1$.} 
			\REPEAT 
			\STATE Obtain an optimal solution of $\tilde{\Pc}_{\text{SR}}^{\text{SD}}(\pmb{\pi},i)$, denoted by $\{(\pmb{Q}(i),\tilde{\pmb{R}}(i))\}$, with an interior point method.
			\STATE Set $i=i+1$.
			\UNTIL{{the convergence criterion $J_{\text{SR}}(i)\leq\epsilon$ is met.}}
			\IF{$\sum_{\Kc\subseteq[K]}{R}_{\Kc}(i-1)>R_{\text{SR}}^{\dagger}$}
			\STATE Set $R_{\text{SR}}^{\dagger}=\sum_{\Kc\subseteq[K]}{R}_{\Kc}(i-1)$, $\pmb{Q}^{\dagger}=\pmb{Q}(i-1)$, $\tilde{\pmb{R}}^{\dagger}=\tilde{\pmb{R}}(i-1)$ and $\pmb{\pi}^{\dagger}=\pmb{\pi}$.
			\ENDIF
			\ENDFOR
		\end{algorithmic}
\end{algorithm}
\vspace{-0.5cm}
\begin{claim}[Convergence of solving $\Pc_{\text{SR}}^{\text{SD}}(\pmb{\pi})$ with CCCP]
For any $\pmb{\pi}$, $\{(\pmb{Q}(i),\tilde{\pmb{R}}(i))\}$ obtained by Step 3 to Step 6 in Algorithm \ref{algo:max_sr_SIC} converges to a stationary point of $\Pc_{\text{SR}}^{\text{SD}}(\pmb{\pi})$, as $i\to\infty$.
\end{claim}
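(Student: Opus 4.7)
The plan is to mirror the argument used for Claim~\ref{lemma:converge}, since the structural ingredients that made CCCP work for $\Pc_{\text{SR}}^{\text{JD}}$ are all present here as well. First I would verify that $\Pc_{\text{SR}}^{\text{SD}}(\pmb{\pi})$ is a DC (difference of convex) program. The objective $\sum_{\Kc\subseteq[K]} R_\Kc$ is linear; the power-budget constraint \eqref{eq:RSpowerconstraint} is linear; and the positive semidefiniteness constraints \eqref{eq:con_PSD} are convex. The only nontrivial constraints are the rate constraints \eqref{region_SIC}, which were rewritten in \eqref{eq:major_mini_SIC} as the sum of an explicitly convex term and an explicitly concave term. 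So the feasible set has a canonical DC description, exactly as in the joint-decoding case.

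Next I would argue that the successively refined problems $\tilde\Pc_{\text{SR}}^{\text{SD}}(\pmb{\pi},i)$ arise from the standard CCCP linearization. The function $L_{k,n}^{\text{SD}}(\Qm;\pmb{Q}(i-1))$ in \eqref{eq:linear_SIC} is the first-order Taylor expansion at $\pmb{Q}(i-1)$ of the concave term in \eqref{eq:major_mini_SIC}, evaluated as a function of the covariance matrices through the quadratic forms $\hv_k^H\Qm_{\Kc'}\hv_k$. Since this term is concave in $\pmb{Q}$, its linearization is a global affine upper bound that is tight at $\pmb{Q}(i-1)$. Replacing the concave term by this affine majorant in \eqref{eq:major_mini_SIC} therefore produces a convex constraint that is satisfied at $\pmb{Q}(i-1)$ whenever the original one is. It follows that $\tilde\Pc_{\text{SR}}^{\text{SD}}(\pmb{\pi},i)$ is a convex restriction of $\Pc_{\text{SR}}^{\text{SD}}(\pmb{\pi})$ that is tight at $\pmb{Q}(i-1)$, and any feasible point of $\Pc_{\text{SR}}^{\text{SD}}(\pmb{\pi})$ that solves $\tilde\Pc_{\text{SR}}^{\text{SD}}(\pmb{\pi},i)$ is feasible for the next iterate.

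From this majorization-minimization structure I would conclude monotonicity of the objective: since $\pmb{Q}(i-1)$ is feasible for $\tilde\Pc_{\text{SR}}^{\text{SD}}(\pmb{\pi},i)$ with the same objective value, the solution $\pmb{Q}(i)$ satisfies $\sum_\Kc R_\Kc(i)\ge \sum_\Kc R_\Kc(i-1)$. Boundedness of the iterates follows from the power constraint \eqref{eq:RSpowerconstraint}, so the sequence has accumulation points. Finally, invoking the general convergence result for CCCP applied to DC programs established in \cite{sun2017MM} (which is exactly what Claim~\ref{lemma:converge} relied on), every limit point of $\{(\pmb{Q}(i),\tilde{\pmb{R}}(i))\}$ satisfies the KKT conditions of $\Pc_{\text{SR}}^{\text{SD}}(\pmb{\pi})$, i.e.\ is a stationary point.

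The main obstacle I would anticipate is purely bookkeeping: checking that the tangent plane in \eqref{eq:linear_SIC} is indeed the correct gradient when $\Qm$ ranges over Hermitian PSD matrices (so the derivative is the rank-one matrix $\hv_k\hv_k^H$ divided by the appropriate denominator, aggregated over both the interference term indexed by $\Kc'\notin\collectionK^{(k)}$ and the not-yet-decoded streams indexed by $n'>n$). Once this identification is done, no new analytic work is required beyond citing \cite{sun2017MM}, exactly as in the proof of Claim~\ref{lemma:converge}.
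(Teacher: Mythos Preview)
Your proposal is correct and follows essentially the same approach as the paper: recognize that $\Pc_{\text{SR}}^{\text{SD}}(\pmb{\pi})$ is a DC program (via the decomposition \eqref{eq:major_mini_SIC}), identify Steps~3--6 as the standard CCCP linearization, and invoke the convergence guarantee of \cite{sun2017MM} exactly as in Claim~\ref{lemma:converge}. In fact the paper does not even spell out a separate proof for this claim, treating it as immediate from the argument for Claim~\ref{lemma:converge}; your additional remarks on monotonicity and boundedness are correct but more detailed than what the paper provides.
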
	
Similarly, the computational complexity
for solving $\tilde\Pc_{\text{SR}}^{\text{SD}}(\pmb{\pi})$ with a
given $\pmb{\pi}$ using CCCP is $\mathcal
O\left(M^6K^{0.5}2^{3.5K}\right)$~\cite{boyd2004convex}. As the number
of all possible choices for $\pmb{\pi}$ is $(2^{K-1}!)^K$, the
computational complexity for Algorithm \ref{algo:max_sr_SIC} is
$\mathcal O\left(M^6K^{0.5}2^{3.5K}(2^{K-1}!)^K\right)$. Although
performing Algorithm \ref{algo:max_sr_SIC} brings higher computational complexity than
performing Algorithm \ref{algo:max_sr} at the BS, successive decoding
has much lower implementation cost at the receivers' side. Since the BS
has much higher computing capability than the receivers, trading extra complexity at the BS for reduced decoding complexity at the receivers seems to be a right choice.

\subsection{Stream Selection}\label{sec:stream_sel}

The prohibitively high computational complexity for Algorithm \ref{algo:max_sr_SIC} at large $K$ comes from the excessive number of all possible streams and the exhaustive search of decoding order.
In fact, we found out by numerical experiments that in most of
the time not all the streams are needed to achieve the best rate
performance. In Fig.~\ref{fig:beam_statistics}, we plot a histogram of
the number of active streams for the precoder design obtained by  Algorithm~\ref{algo:max_sr}. 
We see that it is rare that we need to activate all
the streams.  Indeed, as illustrated in Fig.~\ref{fig:beam_example}, if
a user~(user~3) is sufficiently well separated from the others~(users 1
and 2) in the space domain, no common message should be shared between
this user and the others. In this example, $\rv{M}_{\{1,3\}}$, $\rv{M}_{\{2,3\}}$, and
$\rv{M}_{\{1,2,3\}}$ should be eliminated. Further, in
\cite{li2018linearly} the authors have demonstrated that even when the
optimal solution activates all the streams, removing some streams only
incurs a marginal rate loss. The complexity reduction brought by such
operation is significant and thus appealing for practical
implementation.  
\begin{figure*}
  \begin{minipage}{0.6\textwidth}
	\centering
	\includegraphics[width=0.95\textwidth]{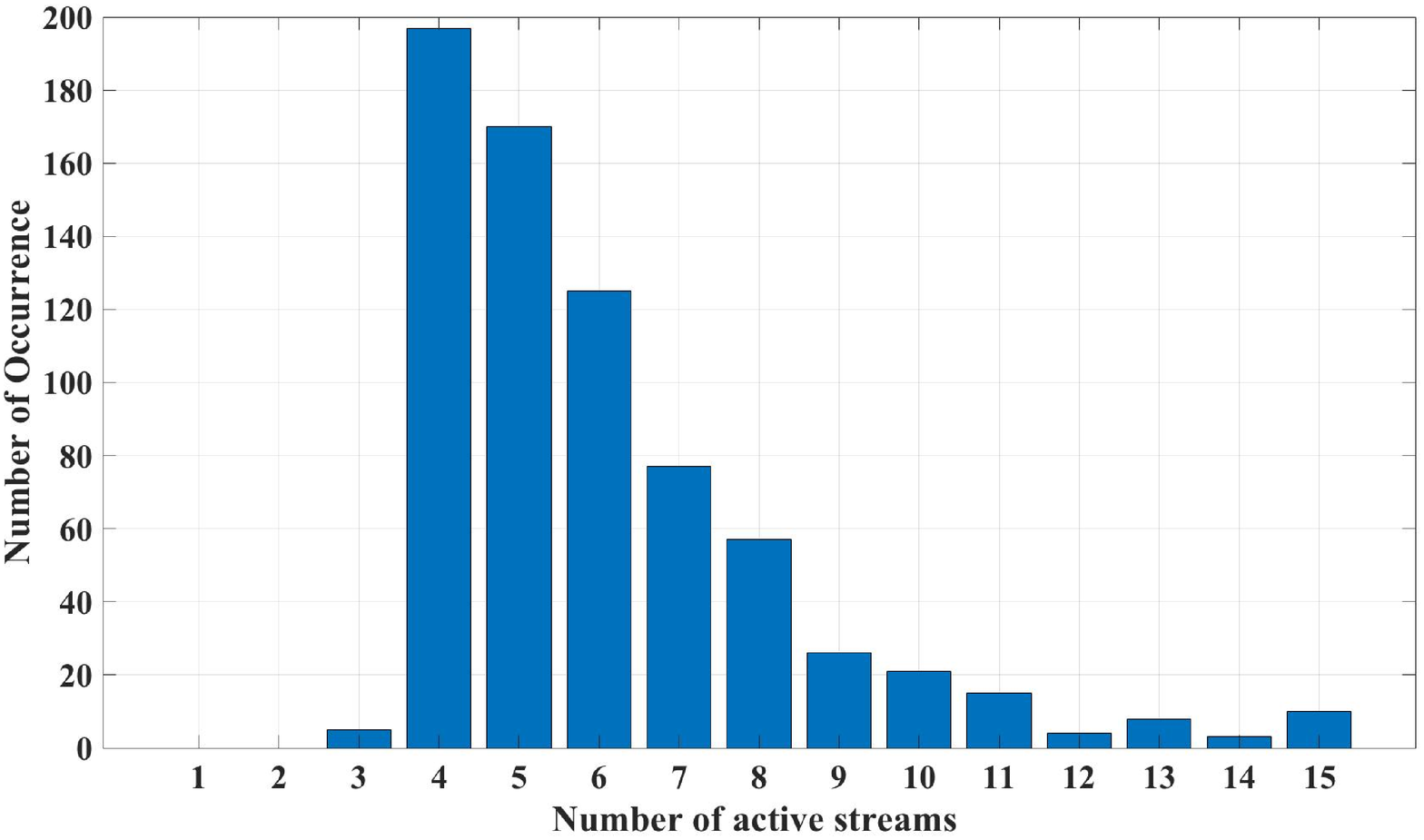}
	\caption{Histogram of the number of active streams to achieve maximum sum rate in the general RS scheme under i.i.d. Rayleigh fading channels, $K=M=4,P=20$ dB.}\label{fig:beam_statistics}
\end{minipage}
\hfill
  \begin{minipage}{0.35\textwidth}
    \vfill
	\centering
	\includegraphics[width=0.8\textwidth]{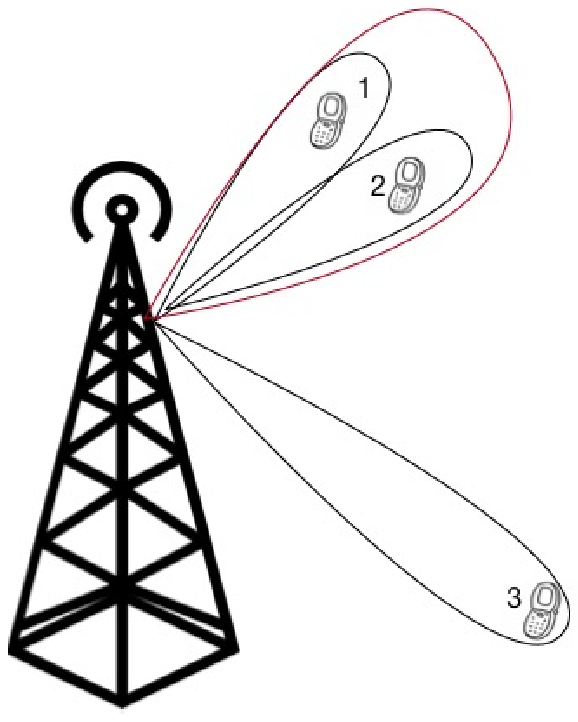}
	\caption{An example to demonstrate the idea that some streams are redundant.}\label{fig:beam_example}
\end{minipage}
\end{figure*}

In this subsection, our goal is to find an efficient way to identify a
``small'' number of ``good'' streams. We propose a stream selection algorithm, which consists of two steps. The first step is to reduce the number of streams from $2^K-1$ to a relatively small number. In the second step, from the remaining streams returned by the first step, we further select \emph{maximum non-overlapping collections} of streams. Maximum non-overlapping collection will be defined in detail later. This stream selection algorithm considerably facilitates the implementation of successive decoding.

In the first step, we apply the stream elimination algorithm (SEA) in \cite{li2018linearly} to reduce the number of streams to smaller than $N_{\text{SEA}}$ in total\footnote{Note that SEA is not limited to successive decoding as presented in this section, it can be also applied under joint decoding to reduce the number of constraints and variables. We omit the details with joint decoding due to space limitation.}. SEA has been proposed to eliminate some of the streams
without losing more than a constant number of bits PCU. Alternatively,
SEA provides a way to gradually eliminate the streams so
as to identify a given number of remaining streams. This is practically
relevant since the number of remaining streams represents the level of
precoding/decoding complexity. Let the collection $\collectionS$ be the set of the remaining streams returned by SEA. 
\begin{claim}
	If the set $\collectionS$ guarantees that the original rate
	region~(without stream elimination) is achievable to within $N$ bits
	PCU, then the original sum rate, weighted sum rate, and worst-user rate
	are achievable to within $KN$, $\sum_k w_k N$, and $N$ bits PCU,
	respectively. 
\end{claim}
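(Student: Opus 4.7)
The plan is to translate the coordinate-wise $N$-bit gap between the original and restricted rate regions directly into gaps on the three scalar utility functions. The argument is short; the main work lies in pinning down the quantifier structure of the hypothesis.

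First, I would restate the hypothesis formally. ``The set $\collectionS$ guarantees that the original rate region is achievable to within $N$ bits PCU'' means that for every rate tuple $(R_1,\ldots,R_K)$ in the original achievable region (using all $2^K-1$ streams), there exists a companion rate tuple $(R_1',\ldots,R_K')$ achievable using only the streams in $\collectionS$ such that $R_k' \ge R_k - N$ for every $k\in[K]$. This uniform, coordinate-wise gap is what the SEA guarantee of \cite{li2018linearly} provides, and it is the only property I would use.

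Next, for each utility function I would fix an optimal tuple $(R_1^\star,\ldots,R_K^\star)$ over the original region, invoke the hypothesis to obtain a companion tuple $(R_1',\ldots,R_K')$ in the restricted region, and then compare term by term. For the sum rate, summing the $K$ inequalities $R_k' \ge R_k^\star - N$ over $k\in[K]$ yields $\sum_k R_k' \ge \sum_k R_k^\star - KN$, so the gap is at most $KN$. For the weighted sum rate with $w_k\ge 0$, multiplying each inequality by $w_k$ and summing gives $\sum_k w_k R_k' \ge \sum_k w_k R_k^\star - N\sum_k w_k$. For the worst-user rate, since $R_k' \ge R_k^\star - N$ for every $k$, taking the minimum on both sides preserves the inequality and gives $\min_k R_k' \ge \min_k R_k^\star - N$, a gap of at most $N$.

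The conclusion then follows because the maximum of $f_m$ over the restricted region is at least $f_m(R_1',\ldots,R_K')$, so each of the three scalar maxima attained with the reduced set $\collectionS$ is within the claimed gap ($KN$, $\sum_k w_k N$, or $N$) of the corresponding maximum over the original region. I do not foresee any substantive obstacle: the proof is essentially a one-line translation of a uniform coordinate-wise gap into three scalar objectives, and the only subtlety is to state the achievability hypothesis in the right quantifier order (``for every'' original tuple, ``there exists'' a restricted tuple within $N$ bits on every coordinate) before performing the term-by-term comparison.
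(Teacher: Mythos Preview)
Your proposal is correct and matches the paper's own argument essentially verbatim: the paper also observes that if $(R_1,\ldots,R_K)$ is achievable without stream elimination then $(R_1-N,\ldots,R_K-N)$ is achievable with the streams in $\collectionS$, and then declares the conclusion straightforward. You have simply spelled out that ``straightforward'' step explicitly for each of the three utility functions.
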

\begin{proof}
	From the assumption, if a rate tuple $(R_1,\ldots,R_K)$ is achievable without stream
	elimination, then $(R_1-N, \ldots, R_K-N)$ is achievable with the
	streams in $\collectionS$. Then, the conclusion is straightforward.   
\end{proof}
The above claim implies that if SEA has some
performance guarantee in terms of rate gap, then a similar rate
guarantee can be obtained in terms of the rate functions of our
interest. Since the rate gap is an upper bound, it may appear large for
practical uses. Nevertheless, our numerical results show that such
guarantee provides meaningful improvement even in practical scenarios.  

In the second step, we further select maximum non-overlapping collections of $\collectionS$. Without loss of generality,
$\collectionS$ can be partitioned according to the cardinality of the
sets inside it, namely, 
\begin{equation}
\collectionS = \bigcup_{k=1}^K \collectionS_k,\label{col_partition}
\end{equation}%
where $\collectionS_k$ only contains sets of cardinality $k$. We also refer to $\collectionS_k$  as layer $k$ streams in the following.  
Let $\tilde{\collectionS} \subseteq \collectionS$ be a sub-collection,
and let us also partition $\tilde{\collectionS}$
according to the cardinality of its sets, i.e., $\tilde{\collectionS} = \bigcup_{k=1}^K
\tilde{\collectionS}_k$. We are interested in $\tilde{\collectionS}$
such that $\tilde{\collectionS}_k$ is a \emph{maximum non-overlapping} collection of the sets of $\collectionS_k$, for all $k\in[K]$.
\begin{definition}\label{def:nonoverlap}
For all $k\in[K]$, $\tilde{\collectionS}_k$ is a \emph{maximum non-overlapping} collection of
the sets of $\collectionS_k$, if the following constraints are satisfied
\begin{itemize}
	\item $\Kc_1\cap\Kc_2 = \emptyset$, for any distinct sets $\Kc_1$
	and $\Kc_2$ in $\tilde{\collectionS}_k$;
	\item if $\hat{\collectionS}_k := \collectionS_k \setminus \tilde{\collectionS}_k \ne
	\emptyset$, then $\Kc_1\cap\Kc_2 \ne \emptyset$ for some
	$\Kc_1\in \tilde{\collectionS}_k$ and 
	$\Kc_2\in \hat{\collectionS}_k$. 
\end{itemize}
\end{definition}
It is possible that there are more than one maximum
non-overlapping collection for each layer. Let $D_k$ denote the number
of maximum non-overlapping collections for $\collectionS_k$, for all $k\in[K]$. Then there are $\prod_k D_k$ different $\tilde{\collectionS}$ in total,  which are assembled in $\collectionU$. Each $\tilde{\collectionS}$ suggests a possible set of active streams for transmission. Algorithm \ref{algo:stream_selection} summarizes the above two-step stream selection procedure. Now let us focus on solving $\Pc_{\text{SR}}^{\text{SD}}$ for a given $\tilde{\collectionS}$. For each $\tilde{\collectionS}\in\collectionU$, the following claim can be easily shown.
\begin{algorithm}
	\caption{ Stream Selection Algorithm}
		\label{algo:stream_selection}
		\begin{algorithmic}[1]
			\STATE {Set a target number $N_{\text{SEA}}$ and use SEA in
				\cite{li2018linearly} to reduce the $2^K-1$ streams to at most
				$N_{\text{SEA}}$ streams, denoted by the collection $\collectionS$.} 
			\STATE Construct $\collectionS_k, k\in [K],$ by 
			partitioning $\collectionS$ according to \eqref{col_partition}.
			\STATE For all $k\in[K]$, construct the set of $D_k$ possible maximum non-overlapping collections $\tilde{\collectionS}_k$, denoted by ${\collectionU}_k$.
			\STATE Assemble the $\prod_k D_k$ possible $\tilde{\collectionS}$ in the set  ${\collectionU} = \left\{\bigcup_{k=1}^K
			\tilde{\collectionS}_k| \tilde{\collectionS}_k\in{\collectionU}_k, k\in[K]\right\}$. 
	\end{algorithmic}
\end{algorithm}
\vspace{-0.5cm}
\begin{claim}
	For all $k\in[K]$, let $\tilde{\collectionK}^{(k)} := {\collectionK}^{(k)} \cap
	\tilde{\collectionS}$ where ${\collectionK}^{(k)}
	:=\{\Kc\subseteq[K]:\,k\in\Kc\}$ and $\tilde{\collectionS}$ is a maximum
	non-overlapping collection. Then, $|\Kc_1|\ne|\Kc_2|$ for any  
	distinct sets $\Kc_1$ and $\Kc_2$ in $\tilde{\collectionK}^{(k)}$. 
\end{claim}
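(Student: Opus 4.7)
The plan is a short proof by contradiction that directly leverages the first bullet of Definition~\ref{def:nonoverlap} together with the partition $\tilde{\collectionS} = \bigcup_{n=1}^K \tilde{\collectionS}_n$ used to build $\tilde{\collectionS}$.

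First I would suppose, for contradiction, that there exist two distinct sets $\Kc_1,\Kc_2 \in \tilde{\collectionK}^{(k)}$ with $|\Kc_1|=|\Kc_2|=n$ for some $n\in[K]$. Since $\tilde{\collectionK}^{(k)} = \collectionK^{(k)}\cap \tilde{\collectionS}$, both $\Kc_1$ and $\Kc_2$ lie in $\tilde{\collectionS}$. Because the partition in~\eqref{col_partition} groups the elements of $\tilde{\collectionS}$ strictly by cardinality, any set in $\tilde{\collectionS}$ of size $n$ must belong to $\tilde{\collectionS}_n$; hence $\Kc_1,\Kc_2 \in \tilde{\collectionS}_n$.

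Next I would use membership in $\collectionK^{(k)}$ to observe that $k \in \Kc_1$ and $k \in \Kc_2$, so $k \in \Kc_1\cap\Kc_2$ and in particular $\Kc_1\cap\Kc_2\neq\emptyset$. But by the first bullet of Definition~\ref{def:nonoverlap}, any two distinct sets in the maximum non-overlapping collection $\tilde{\collectionS}_n$ are disjoint. This contradicts $\Kc_1\cap\Kc_2\neq\emptyset$ unless $\Kc_1=\Kc_2$, which in turn contradicts the assumption that they are distinct. Therefore no two distinct sets of $\tilde{\collectionK}^{(k)}$ can share the same cardinality, which is the claim.

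There is essentially no hard step here; the only thing to be careful about is the implicit use of the partition in~\eqref{col_partition} to promote the cardinality equality into joint membership in a single $\tilde{\collectionS}_n$, which is what lets the disjointness property of Definition~\ref{def:nonoverlap} apply. The second bullet of Definition~\ref{def:nonoverlap} (maximality) plays no role in this argument; only pairwise disjointness within each layer is used.
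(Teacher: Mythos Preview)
Your proof is correct and is exactly the natural argument: two distinct equal-cardinality sets in $\tilde{\collectionK}^{(k)}$ would both lie in the same layer $\tilde{\collectionS}_n$ yet share the element $k$, contradicting the pairwise disjointness required by Definition~\ref{def:nonoverlap}. The paper itself does not spell out a proof (it only states that the claim ``can be easily shown''), so your argument is precisely the intended one; your remark that maximality is not used is also accurate.
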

Note that each user $k\in[K]$ only needs to decode the streams in
$\tilde{\collectionK}^{(k)}$. According to the above claim, there is at most
one stream per layer, which implies that each user can decode all the
streams successively with the following rule. 
\newtheorem*{SDrule*}{Successive decoding rule}\label{claim:SIC_order}
\begin{SDrule*}
	For all $k\in[K]$, let $\Kc_1,\Kc_2,\ldots\in
	\tilde{\collectionK}^{(k)} := {\collectionK}^{(k)} \cap \tilde{\collectionS}$ be
	ordered with descending cardinality. Then user~$k$ decodes the streams $\Kc_1,\Kc_2,\ldots$ in order.
	
\end{SDrule*}
Let us introduce the notation $\pmb{\tilde{\pi}}^{(k)}_{\tilde{\collectionS}}$ to specify the above unique decoding order at user $k$ for a given $\tilde{\collectionS}$. We further define the decoding order $\pmb{\tilde{\pi}}_{\tilde{\collectionS}} := \left(\pmb{\tilde{\pi}}^{(k)}_{\tilde{\collectionS}}\right)_{k\in[K]}$, which is particular for the given $\tilde{\collectionS}$ since each element $\pmb{\tilde{\pi}}^{(k)}_{\tilde{\collectionS}}$ is unique. Therefore, for each possible $\tilde{\collectionS}$, we can solve $\Pc_{\text{SR}}^{\text{SD}}(\pmb{\tilde{\pi}}_{\tilde{\collectionS}})$ with $\tilde{\collectionK}^{(k)}:=\collectionK^{(k)}\cap\tilde{\collectionS}$ replacing $\collectionK^{(k)}$ using CCCP, and return a stationary point. By now, we can summarize the details for solving $\Pc_{\text{SR}}^{\text{SD}}$ with the proposed stream selection in Algorithm \ref{algo:optimize_with_stream_selection}. The following example can help clarify the above stream selection and
successive decoding rule.
\begin{algorithm}
	\caption{Solving $\Pc_{\text{SR}}^{\text{SD}}$ with The Proposed Stream Selection}
		\label{algo:optimize_with_stream_selection}
		\begin{algorithmic}[1]
			\STATE Generate $\collectionU$ according to Algorithm \ref{algo:stream_selection}.
			\STATE Set $R_{\text{SR}}^{\diamond}$=0.
			\FOR{each $\tilde{\collectionS}\in\collectionU$}
				\REPEAT 
			\STATE Obtain an optimal solution of $\Pc_{\text{SR}}^{\text{SD}}(\pmb{\tilde{\pi}}_{\tilde{\collectionS}},i)$ with $\tilde{\collectionK}^{(k)}:=\collectionK^{(k)}\cap\tilde{\collectionS}$ replacing $\collectionK^{(k)}$, denoted by $\{(\pmb{Q}(i),\tilde{\pmb{R}}(i))\}$, with an interior point method.
			\STATE Set $i=i+1$.
			\UNTIL{{the convergence criterion $J_{\text{SR}}(i)\leq\epsilon$ is met.}}
			\IF{$\sum_{\Kc\subseteq[K]}{R}_{\Kc}(i-1)>R_{\text{SR}}^{\diamond}$}
			\STATE Set $R_{\text{SR}}^{\diamond}=\sum_{\Kc\subseteq[K]}{R}_{\Kc}(i-1)$, $\pmb{Q}^{\diamond}=\pmb{Q}(i-1)$, $\tilde{\pmb{R}}^{\diamond}=\tilde{\pmb{R}}(i-1)$, $\tilde{\collectionS}^{\diamond}=\tilde{\collectionS}$ and  $\pmb{\pi}^{\diamond}=\pmb{\tilde{\pi}}_{\tilde{\collectionS}}$.
			\ENDIF
			\ENDFOR
	\end{algorithmic}
\end{algorithm}

\begin{example}
	Let us consider an example with $K=5$ users. 
	Assume that the initial collection ${\collectionS}$ returned by SEA is 
	\begin{equation*}
	\collectionS = \bigl\{ \{1\},\{2\},\{3\},\{4\},\{5\},\{1,2\}, \{2,3\}, \{3,4\},\{4,5\} ,\{1,2,3\},\{3,4,5\}\bigr\}.
	\end{equation*}
	Therefore, $\collectionS$ can be partitioned as in \eqref{col_partition} with
	\begin{align*}
	&\collectionS_1=\bigl\{ \{1\},\{2\},\{3\},\{4\},\{5\}\bigr\},\quad
	\collectionS_2 = \bigl\{ \{1,2\}, \{2,3\}, \{3,4\},\{4,5\} \bigr\},\nonumber\\
	&\collectionS_3 = \bigl\{ \{1,2,3\},\{3,4,5\} \bigr\},\qquad\quad
	\collectionS_4=\collectionS_5=\emptyset.
	\end{align*}
	Based on Definition \ref{def:nonoverlap}, we can verify that the maximum non-overlapping collection $\tilde{\collectionS}_1$ is ${\collectionS}_1$ itself. Hence,
		${\collectionU}_1=\tilde{\collectionS}_1=\collectionS_1$ and $D_1=1$. 
		Next, we put the $D_2=3$ possible
		maximum non-overlapping collections $\tilde{\collectionS}_2$ in ${\collectionU}_2$:
		\begin{equation*}
		{\collectionU}_2 =\biggl\{ \bigl\{ \{1,2\}, \{3,4\} \bigr\}, \quad   \bigl\{  \{2,3\}, \{4,5\} \bigr\}, \quad   \bigl\{ \{1,2\}, \{4,5\} \bigr\}\biggr\}.
		\end{equation*}%
		Similarly, there are $D_3=2$ possible maximum non-overlapping collections $\tilde{\collectionS}_3$ in ${\collectionU}_3$:
		\begin{equation*}
		{\collectionU}_3=\biggl\{  \bigl\{ \{1,2,3\}\bigr\}, \quad   \bigl\{ \{3,4,5\} \bigr\}\biggr\}.
		\end{equation*}%
		%
		As a result, there are $D_1 D_2 D_3 = 6$ possible maximum non-overlapping
		collections of $\collectionS$ in ${\collectionU}$: 
		\begin{equation}
		{\collectionU} = \left\{\bigcup_{k=1}^3
		\tilde{\collectionS}_k| \tilde{\collectionS}_k\in{\collectionU}_k, k=1,2,3\right\}.
		\end{equation}%
		Let us consider $\tilde{\collectionS} =
		\{\{1\},\{2\},\{3\},\{4\},\{5\}, \{1,2\}, \{3,4\}, \{1,2,3\}\}\in\collectionU$ without
		loss of generality. User~1 has to decode the streams in
		${\collectionK}^{(1)} \cap \tilde{\collectionS}= \{\{1\},
		\{1,2\}, \{1,2,3\}\}$. This can be done in order $\pmb{\tilde{\pi}}^{(1)}_{\tilde{\collectionS}}:\{1,2,3\}\to
		\{1,2\} \to \{1\}$ according to decreasing cardinality of the sets.
		Similarly, user~2 decodes in order $\pmb{\tilde{\pi}}^{(2)}_{\tilde{\collectionS}}:\{1,2,3\}\to \{1,2\} \to \{2\}$;
		user~3 decodes in order $\pmb{\tilde{\pi}}^{(3)}_{\tilde{\collectionS}}: \{1,2,3\}\to \{3,4\} \to \{3\}$; 
		user~4 decodes in order $\pmb{\tilde{\pi}}^{(4)}_{\tilde{\collectionS}}: \{3,4\} \to \{4\}$; user~5 decodes $\{5\}$. Note that for any of the six collections $\tilde{\collectionS}\in\collectionU$,
		one can apply the same procedure, and finally return the best collection  $\tilde{\collectionS}^{\diamond}$ with the highest rate $R_{\text{SR}}^{\diamond}$. 
\end{example}
\begin{remark}
	If the users decode the streams with increasing cardinality
        order instead, each user first decodes its own private message
        followed by the common messages. In such case, to mitigate the interference while decoding the private message, the power allocated to common messages is likely to be suppressed under noise level, and the benefit of the general RS is therefore not fully exploited. 
\end{remark}

The proposed stream selection algorithm essentially reduces
the number of active streams and removes the excessive search over the decoding order.
However, stream selection itself adds extra complexity. Thus, we would
like to investigate on the overall complexity of Algorithm \ref{algo:optimize_with_stream_selection}, i.e., solve
$\Pc_{\text{SR}}^{\text{SD}}$ with the proposed stream selection
algorithm. Algorithm \ref{algo:optimize_with_stream_selection} consists of three parts, 1) perform SEA, 2)
select different $\tilde{\collectionS}$ and assemble them in
$\tilde{\collectionU}$, 3) solve
$\Pc_{\text{SR}}^{\text{SD}}(\pmb{\tilde{\pi}}_{\tilde{\collectionS}})$
for each $\tilde{\collectionS}$. The
complexity of applying SEA to reduce the $2^K-1$ streams to at most
$N_{\text{SEA}}$ streams is $\mathcal O(KM^22^K)$ \cite{li2018linearly}.
Then to find all the possible $\tilde{\collectionS}$, we can first
establish a lookup table offline for a sufficiently large $K$. As long
as $N_{\text{SEA}}\le 2^K-1$, we can construct $\tilde{\collectionU}$ by
searching in the table, regardless of the channel realization.
Therefore, the complexity of finding different $\tilde{\collectionS}$
can be neglected. The complexity of the third step involves the complexity of solving
$\Pc_{\text{SR}}^{\text{SD}}(\pmb{\tilde{\pi}}_{\tilde{\collectionS}})$
for all $\tilde{\collectionS}$.  However, it is hard to characterize
the exact value of $D_k$ for an arbitrary $N_{\text{SEA}}$ in general. The
difficulty comes from the unknown overlap among the remaining streams
after SEA. Therefore, we present an upper bound on the complexity for
this part. For a given $\tilde{\collectionS}$,
$\Pc_{\text{SR}}^{\text{SD}}(\pmb{\tilde{\pi}}_{\tilde{\collectionS}})$
is a convex problem with at most $(M^2+1)N_{\text{SEA}}$ variables and
$1+(K+1)N_{\text{SEA}}$ constraints. Therefore, the worst-case complexity for solving
$\Pc_{\text{SR}}^{\text{SD}}(\pmb{\tilde{\pi}}_{\tilde{\collectionS}})$ for a given $\tilde{\collectionS}$ using CCCP
 is $\mathcal O\left(M^6N_{\text{SEA}}^{3.5}K^{0.5}\right)$. Next, to
characterize $D_k$, we consider all the streams of layer $k$, i.e.,
$\binom{K}{k}$ in total, if there exists at least one layer-$k$ stream
after SEA. Note that such calculation may incorporate several streams
already eliminated by SEA, and hence leads to an upper bound. Let
$k^*({N_{\text{SEA}}})$ denote the smallest $k$ such that
$\sum_{k'=1}^k\binom{K}{k'}\ge N_{\text{SEA}}$. Then  the number of
possible $\tilde{\collectionS}$ is no greater than 
$\prod_{k=2}^{k^*({N_{\text{SEA}}})}\frac{\binom{K}{k}\binom{K-k}{k}\cdots\binom{K-(\lfloor
\frac{K}{k}\rfloor-1)\times k}{k}}{\lfloor \frac{K}{k} \rfloor\times
\max{(1,\lfloor \frac{K}{k}\rfloor-1)}}$, which  can be further upper
bounded by $\binom{K}{K/2}^{\frac{N_{\text{SEA}}}{2}}$. As a result, the
worst-case complexity of solving $\Pc_{\text{SR}}^{\text{SD}}$ with the
proposed stream selection is $\mathcal
O\left(KM^22^K+M^6N_{\text{SEA}}^{3.5}K^{0.5}\binom{K}{K/2}^{\frac{N_{\text{SEA}}}{2}}\right)$.
The above calculation is omitted for brevity. Let us consider $M=K=4$, then the complexity of solving $\Pc_{\text{SR}}^{\text{SD}}$ without stream selection is about $10^{26}$, while the upper bound on the worst-case complexity of solving $\Pc_{\text{SR}}^{\text{SD}}$ with our stream selection is about $10^{8}$ if $N_{\text{SEA}}=6$. The above comparison suggests that the complexity of precoder design after our stream selection algorithm is considerably reduced. 
\subsection{Imperfect CSIT}
In practice, the CSIT is obtained by direct estimation or limited
feedback from the receivers. Therefore, the information may be
inaccurate or outdated. A common model for imperfect CSIT is to assume that 
\begin{equation}
  \rvMat{H} = \hat{\rvMat{H}} + \tilde{\rvMat{H}},\label{eq:imperfect_H}
\end{equation}%
where $\hat{\rvMat{H}}$ is the channel estimate and $\tilde{\rvMat{H}}$
is the estimation error that is unknown to the transmitter, and they are
independent. In addition, we assume for simplicity that $\tilde{\rvMat{H}}$ has
i.i.d.~entries with variance $\sigma^2$ each. 
 Let $\hat{\Hm}$ be a particular realization of $\hat{\rvMat{H}}$. The
precoder design should only depend on the estimate $\hat{\Hm}$ and the
CSIT inaccuracy. 

A proper way to characterize the performance with
imperfect CSIT is the outage formulation, e.g., to find out the achievable
rate tuples for a given outage probability. But it is hard to derive
closed-form expressions exploitable for precoder optimization. Therefore,
instead of reformulating entirely the problem with imperfect CSIT, we seek an
adaptation of the one derived for perfect CSIT. To that end, we adopt an
ergodic formulation which replaces \eqref{eq:region_JD} with the following constraint \footnote{Strictly speaking, the expected rate \eqref{eq:region_JD2} is achievable only when the channel estimate is fixed during the transmission and known at the transmitter side, while the channel estimation error is time varying according to a given distribution.}
\begin{equation}
\sum_{\mathcal{K}\in \collectionS^{(k)}} {R}_{\mathcal{K}} \le
\mathbb{E}_{\tilde{\rvVec{H}}_k}\left\{ \log\biggl(1 +
\frac{\sum_{\mathcal{K}\in \collectionS^{(k)}} \rvVec{H}_k^\H
\Qm_{\mathcal{K}} \rvVec{H}_k}{
	1 + \sum_{\mathcal{K}'\not\in {\collectionK}^{(k)}}
        \rvVec{H}_k^\H \Qm_{\mathcal{K}'} \rvVec{H}_k}
        \biggr) \right\},\quad\forall\, k\in[K], \collectionS^{(k)} \subseteq
\collectionK^{(k)},\label{eq:region_JD2}  
\end{equation}
where $\rvVec{H}_k := \hat{\hv}_k + \tilde{\rvVec{H}}_k$ is the random
channel vector; and the expectation is over the CSI error. Note that in the
above expression, we implicitly assume that the random channel vector is
known at the receivers, whereas only the expectation on
the right-hand side of \eqref{eq:region_JD2}
is known at the transmitter's side. Such formulation is also
widely used in the literature of robust optimization \cite{boyd2004convex, charnes1959chance}.
There are two issues with the
above formulation though. First, we need to know the exact distribution of the
CSI error to compute the expectation. This may not be possible in
practice. Second, even with the exact distribution, finding the
expectation can be computationally costly. In order to capture the CSI
error in a simple way, we work with a lower bound on the expectation on
the right-hand side of \eqref{eq:region_JD2}. 
\begin{claim} \label{claim:imperfect}If the CSI error is symmetrically distributed such that
  $-\tilde{\rvMat{H}}$ has the same distribution as
  $\tilde{\rvMat{H}}$, then we have the following lower bound on the expectation
  \begin{multline}
\mathbb{E}_{\tilde{\rvVec{H}}_k}\left\{ \log\biggl(1 +
\frac{\sum_{\mathcal{K}\in \collectionS^{(k)}} \rvVec{H}_k^\H
\Qm_{\mathcal{K}} \rvVec{H}_k}{
	1 + \sum_{\mathcal{K}'\not\in {\collectionK}^{(k)}}
        \rvVec{H}_k^\H \Qm_{\mathcal{K}'} \rvVec{H}_k}
        \biggr) \right\} \\ \ge 
 \log\biggl(1 +
 \frac{\sum_{\mathcal{K}\in \collectionS^{(k)}} \hat{\hv}_k^\H
 \Qm_{\mathcal{K}} \hat{\hv}_k}{
	1 + \sum_{\mathcal{K}'\not\in {\collectionK}^{(k)}}
        \bigl( \hat{\hv}_k^\H \Qm_{\mathcal{K}'} \hat{\hv}_k + \sigma^2
        \trace( \Qm_{\mathcal{K}'} ) \bigr) }  
        \biggr),  \quad\forall\, k\in[K], \collectionS^{(k)} \subseteq
\collectionK^{(k)}. 
   \end{multline}%
\end{claim}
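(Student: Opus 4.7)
The plan is to reduce the inequality to Jensen's inequality on a convex function, supplemented by a symmetrization step that handles the stochastic signal power. Denote $A_S := \sum_{\mathcal{K}\in\collectionS^{(k)}} \Qm_{\mathcal{K}}$ and $A_I := \sum_{\mathcal{K}'\notin\collectionK^{(k)}} \Qm_{\mathcal{K}'}$; both are positive semidefinite since each $\Qm_{\mathcal{K}} \succeq 0$. Write $\hat S := \hat{\hv}_k^\H A_S \hat{\hv}_k$ and $\hat I := \hat{\hv}_k^\H A_I \hat{\hv}_k$, so the target right-hand side is $\log\bigl(1 + \hat S/(1 + \hat I + \sigma^2\trace(A_I))\bigr)$.

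First I would verify by direct differentiation that, for any fixed $c \ge 0$, the scalar map $\varphi_c(y) := \log\bigl(1 + c/(1+y)\bigr) = \log(1+y+c) - \log(1+y)$ is convex on $[0,\infty)$; indeed $\varphi_c''(y) = c\bigl(2(1+y)+c\bigr)/\bigl((1+y)(1+y+c)\bigr)^2 \ge 0$. Second, the sign-symmetry of $\tilde{\rvVec{H}}_k$ gives $\mathbb{E}[\tilde{\rvVec{H}}_k] = \mathbf{0}$, and together with the i.i.d.\ variance-$\sigma^2$ assumption on its entries this produces the moment identity $\mathbb{E}[\rvVec{H}_k^\H \Qm \rvVec{H}_k] = \hat{\hv}_k^\H \Qm \hat{\hv}_k + \sigma^2 \trace(\Qm)$ for every Hermitian $\Qm$. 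Applying Jensen's inequality to $\varphi_{\hat S}$ with random argument $\rvVec{H}_k^\H A_I \rvVec{H}_k$ and using the moment identity gives $\mathbb{E}[\log(1 + \hat S/(1 + \rvVec{H}_k^\H A_I \rvVec{H}_k))] \ge \log(1 + \hat S/(1 + \hat I + \sigma^2 \trace(A_I)))$, matching the target exactly---but with the deterministic $\hat S$ in place of the random signal power $\rvVec{H}_k^\H A_S \rvVec{H}_k$.

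The remaining step is to promote the deterministic $\hat S$ to the stochastic $\rvVec{H}_k^\H A_S \rvVec{H}_k$ inside the expectation, i.e.\ to show $\mathbb{E}[\log(1 + \rvVec{H}_k^\H A_S \rvVec{H}_k/(1 + \rvVec{H}_k^\H A_I \rvVec{H}_k))] \ge \mathbb{E}[\log(1 + \hat S/(1 + \rvVec{H}_k^\H A_I \rvVec{H}_k))]$. The pointwise inequality fails, so I would invoke the sign-symmetry of $\tilde{\rvVec{H}}_k$ by pairing each realization with its negation. Decomposing $\rvVec{H}_k^\H \Qm \rvVec{H}_k = \hat{\hv}_k^\H \Qm \hat{\hv}_k + 2\Re(\hat{\hv}_k^\H \Qm \tilde{\rvVec{H}}_k) + \tilde{\rvVec{H}}_k^\H \Qm \tilde{\rvVec{H}}_k$ separates each quadratic form into a deterministic piece, an odd linear piece in $\tilde{\rvVec{H}}_k$ with zero mean, and a non-negative $\tilde{\rvVec{H}}_k$-even quadratic piece. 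After averaging the two integrands over $\pm\tilde{\rvVec{H}}_k$ the odd linear pieces enter only through their squares, and the Cauchy--Schwarz inequality for the $A_S$-bilinear form, $|\hat{\hv}_k^\H A_S \tilde{\rvVec{H}}_k|^2 \le \hat S \cdot \tilde{\rvVec{H}}_k^\H A_S \tilde{\rvVec{H}}_k$, combined with the non-negativity of $\tilde{\rvVec{H}}_k^\H A_S \tilde{\rvVec{H}}_k$ provided by $A_S \succeq 0$, yields the desired domination.

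The main obstacle will be this final symmetrization plus Cauchy--Schwarz step. The pointwise comparison breaks down precisely when the linear cross-term $\Re(\hat{\hv}_k^\H A_S \tilde{\rvVec{H}}_k)$ is negative and outweighs the non-negative quadratic correction; the sign-symmetry is essential because pairing with $-\tilde{\rvVec{H}}_k$ cancels this cross-term at first order, leaving a second-order residual that is exactly what the PSD structure of $A_S$ controls. Everything else---convexity of $\varphi_c$, the moment identity, and the two applications of Jensen---is routine.
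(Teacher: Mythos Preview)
Your first two steps are sound: the convexity of $\varphi_c$ and the moment identity hold, and the Jensen step that replaces the random interference by $\hat I+\sigma^2\trace(A_I)$ while keeping the deterministic $\hat S$ in the numerator lands exactly on the stated right-hand side.

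The gap is step~3. After cancelling the common $\log(1+\rvVec{H}_k^\H A_I\rvVec{H}_k)$ from both sides, what you actually need is
\[
\mathbb{E}\bigl[\log\bigl(1+\rvVec{H}_k^\H(A_S+A_I)\rvVec{H}_k\bigr)\bigr]\ \ge\ \mathbb{E}\bigl[\log\bigl(1+\hat S+\rvVec{H}_k^\H A_I\rvVec{H}_k\bigr)\bigr],
\]
and you propose to prove this realisation-wise after averaging over $\pm\tilde{\rvVec{H}}_k$. That pointwise-after-pairing inequality is false. Take $M=2$, $A_S=A_I=\Id$, $\hat{\hv}_k=(1,0)^\T$, and $\tilde{\rvVec{H}}_k=(0.95,\sqrt{1-0.95^2})^\T$, so that $\|\hat{\hv}_k\pm\tilde{\rvVec{H}}_k\|^2\in\{3.9,\,0.1\}$ and $\hat S=1$. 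The paired left side is $\log 8.8+\log 1.2$ while the paired right side is $\log 5.9+\log 2.1$, and $8.8\cdot1.2=10.56<12.39=5.9\cdot2.1$. Your Cauchy--Schwarz bound $a^2\le4\hat S\,b$ is satisfied here ($a=1.9$, $b=1$), but it controls only the $A_S$-cross-term; the $A_I$-cross-term is driven by the \emph{same} $\tilde{\rvVec{H}}_k$, and it is precisely this correlation between the two cross-terms that defeats the pairing argument. No amount of Cauchy--Schwarz on $A_S$ alone will repair it.

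The paper avoids this obstacle by a different decomposition: it writes the original expectation as the difference $\mathbb{E}[\log(1+\rvVec{H}_k^\H(A_S+A_I)\rvVec{H}_k)]-\mathbb{E}[\log(1+\rvVec{H}_k^\H A_I\rvVec{H}_k)]$, upper-bounds the second term by Jensen, and lower-bounds the first by showing that $\sigma\mapsto\mathbb{E}[\log(1+(\hat{\hv}+\sigma\rvVec{G})^\H\Am(\hat{\hv}+\sigma\rvVec{G}))]$ is non-decreasing for any $\Am\succeq0$ and sign-symmetric $\rvVec{G}$. The symmetry enters through the derivative in $\sigma$ --- the odd-in-$\rvVec{G}$ part of the integrand has zero expectation once the denominator is symmetrised --- rather than via a comparison of paired realisations. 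Your convexity observation on $\varphi_c$ plays no role in that route.
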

\begin{proof}
  The expectation can be rewritten as the difference of two
  expectations, i.e.,
  \begin{multline}
\mathbb{E}_{\tilde{\rvVec{H}}_k}\left\{ \log\biggl(1 +
\frac{\sum_{\mathcal{K}\in \collectionS^{(k)}} \rvVec{H}_k^\H
\Qm_{\mathcal{K}} \rvVec{H}_k}{ 1 + \sum_{\mathcal{K}'\not\in
{\collectionK}^{(k)}} \rvVec{H}_k^\H \Qm_{\mathcal{K}'} \rvVec{H}_k}
\biggr) \right\} \\ =  
\mathbb{E}_{\tilde{\rvVec{H}}_k}\!\!\left\{ \log\left(1 + \rvVec{H}_k^\H
 \left( \sum_{\mathcal{K}'\not\in {\collectionK}^{(k)}} 
\!\!\Qm_{\mathcal{K}'}  +\!\!\!\! 
\sum_{\mathcal{K}\in \collectionS^{(k)}} 
\!\!\Qm_{\mathcal{K}} \right) \right) \!\rvVec{H}_k \!\right\}
- 
\mathbb{E}_{\tilde{\rvVec{H}}_k}\!\!\left\{
 \log\left( 1 +\!\! \sum_{\mathcal{K}'\not\in {\collectionK}^{(k)}}
 \rvVec{H}_k^\H \Qm_{\mathcal{K}'} \rvVec{H}_k \right) \right\}, \label{eq:tmp332}
      \end{multline}
where the second term can be upper bounded with Jensen's inequality as 
\begin{equation}
  \mathbb{E}_{\tilde{\rvVec{H}}_k}\left\{
 \log\left( 1 + \sum_{\mathcal{K}'\not\in {\collectionK}^{(k)}}
 \rvVec{H}_k^\H \Qm_{\mathcal{K}'} \rvVec{H}_k \right) \right\} \le 
\log\biggl(
	1 + \sum_{\mathcal{K}'\not\in {\collectionK}^{(k)}}
        \bigl( \hat{\hv}_k^\H \Qm_{\mathcal{K}'} \hat{\hv}_k + \sigma^2
        \trace( \Qm_{\mathcal{K}'} ) \bigr) 
        \biggr). 
\end{equation}%
Next, we look at the first term on the right-hand side of
\eqref{eq:tmp332}. We consider a simple form of it as
\begin{equation}
  f(\sigma) := \mathbb{E}_{\rvVec{G}}\left\{ \log (1+ (\hat{\hv} +
  \sigma \rvVec{G})^\H \Am (\hat{\hv} + \sigma \rvVec{G})) \right\},
\end{equation}%
where the random vector $\sigma \rvVec{G}$ has the same distribution as
$\tilde{\rvVec{H}}_k$; $\Am$ is some positive semi-definite matrix. 
To prove the claim, it is enough to show that
$f(\sigma)$ defined above is non-decreasing with $\sigma$. To that end,
let us take the first derivative of $f(\sigma)$. 
\begin{align}
  \frac{\mathrm{d} f(\sigma)}{\mathrm{d} \sigma} &= \log e\cdot
  \mathbb{E}_{\rvVec{G}}\left\{ \frac{ 2\sigma \rvVec{G}^\H \Am
  \rvVec{G} + 2   \mathrm{Re}\{\rvVec{G}^\H \Am \hat{\hv}\}}{1+ (\hat{\hv} +
  \sigma \rvVec{G})^\H \Am (\hat{\hv} + \sigma \rvVec{G})}  \right\}
  \label{eq:tmp1} \\
  &\ge \log e\cdot \mathbb{E}_{\rvVec{G}}\left\{ \frac{  2  \mathrm{Re}\{\rvVec{G}^\H \Am \hat{\hv}\}}{1+ (\hat{\hv} + \sigma \rvVec{G})^\H \Am (\hat{\hv} + \sigma \rvVec{G})}  \right\} \label{eq:tmp2}\\
  &\ge \log e\cdot \mathbb{E}_{\rvVec{G}}\left\{ \frac{  2
  \mathrm{Re}\{\rvVec{G}^\H \Am \hat{\hv}\}}{1+ \max_{b\in\{1,-1\}}
  \left\{
  (\hat{\hv} + b \sigma \rvVec{G})^\H \Am (\hat{\hv} + b \sigma
  \rvVec{G}) \right\}}  \right\} \label{eq:tmp3}\\
  &= 0 \label{eq:tmp4}
\end{align}%
where \eqref{eq:tmp1} is obtained by putting the derivative inside the
expectation;\footnote{We can easily check that this can be done, e.g.,
using the dominated convergence theorem.} \eqref{eq:tmp2} comes from the
fact that $\Am$ is positive semi-definite, i.e., $\rvVec{G}^\H \Am
\rvVec{G} \ge 0$ with probability $1$; we introduced $b\in\{1,-1\}$ to
obtain the lower bound \eqref{eq:tmp3} and to symmetrize the denominator
with respect to $\rvVec{G}$; the last equality is from the fact that
$\rvVec{G}$ is symmetrically distributed and that the function inside the expectation is odd
with respect to $\rvVec{G}$. Indeed, for any odd function $g(\cdot)$, we
have $\mathbb{E}\left\{ g(\rvVec{G}) \right\} = \mathbb{E}\left\{
g(-\rvVec{G}) \right\} = -\mathbb{E}\left\{ g(\rvVec{G}) \right\}$ due
to the symmetry of the distribution of $\rvVec{G}$ and the symmetry of
the function, respectively. This leads to  $\mathbb{E}\left\{
g(\rvVec{G}) \right\} = 0$. 
\end{proof}
Note that the symmetry of the CSI error is a mild assumption that can be
satisfied in practical situations. Therefore, to take into account the
CSI inaccuracy for joint decoding, we propose to replace \eqref{eq:region_JD} with the following constraint
\begin{equation}
\sum_{\mathcal{K}\in \collectionS^{(k)}} {R}_{\mathcal{K}} \le
\log\biggl(1 +
 \frac{\sum_{\mathcal{K}\in \collectionS^{(k)}} \hat{\hv}_k^\H
 \Qm_{\mathcal{K}} \hat{\hv}_k}{
	1 + \sum_{\mathcal{K}'\not\in {\collectionK}^{(k)}}
        \bigl( \hat{\hv}_k^\H \Qm_{\mathcal{K}'} \hat{\hv}_k + \sigma^2
        \trace( \Qm_{\mathcal{K}'} ) \bigr) }  
        \biggr),  \quad\forall\, k\in[K], \collectionS^{(k)} \subseteq
\collectionK^{(k)}.\label{eq:region_JD3}  
\end{equation}%
Similarly, for successive decoding, we can replace \eqref{region_SIC} with the following constraint
\begin{align}
&{R}_{\Kc^{(k)}_{\pi_{k,n}}} \le\nonumber\\
&\log\biggl(1+\frac{\hat{\hv}_k^H\Qm_{\Kc^{(k)}_{\pi_{k,n}}}\hat{\hv}_k}{1+\sum_{{\Kc'}\not\in
		\collectionK^{(k)}}\bigl( \hat{\hv}_k^H\Qm_{{\Kc'}}\hat{\hv}_k+ \sigma^2
	\trace( \Qm_{\mathcal{K}'} )\bigr)+\sum_{n'>n}\bigl(\hat{\hv}_k^H\Qm_{\Kc^{(k)}_{\pi_{k,n'}}}\hat{\hv}_k+ \sigma^2
	\trace( \Qm_{\Kc^{(k)}_{\pi_{k,n'}}} )\bigr)}\biggr),\nonumber\\
&\qquad\qquad \qquad\qquad \qquad\qquad \qquad\qquad \qquad\qquad \forall\,k\in[K], n\in[2^{K-1}].\label{imperfect_SIC}
\end{align}
With the above constraints, we can still apply CCCP to
solve the precoder design problems under imperfect CSIT, as presented in the Section \ref{sec:op_JD} and in Section \ref{sec:op_SIC}, since the
presences of the CSI error terms in \eqref{eq:region_JD3} and \eqref{imperfect_SIC} do not change
the convexity of the denominators. 
Essentially, here we try to optimize some lower bounds of the rate
functions over a set of covariance matrices. Then we can apply the
precoders returned by the algorithms, and achieve at least as good
as the lower bounds predict.

\section{Numerical Results}\label{sec:simu}
In this section, we provide some numerical results to illustrate
the performance of the proposed algorithms. While we focus on the
sum rate performance in this section, similar conclusions
can be obtained for the weighted sum rate, worst-user rate, or minimum power
scenarios. Assuming the spatially correlated Rayleigh-fading channel,
we have $\hv_k\sim\mathcal{CN}(\bm{0},\Rm_k)$, where $\Rm_k$ is a
positive semi-definite channel covariance matrix. With the
Karhunen-Loeve representation, the downlink channel $\hv_k$ of user $k$
is in the form $\hv_k=\Um_k \bm{\Lambda}_k^{\frac{1}{2}}\wv_k$, where
$\wv_k\in \mathbb{C}^{r_k\times1}\sim\mathcal{CN}(\bm{0},\Id)$,
$\bm{\Lambda}_k$ is an $r_k\times r_k$ diagonal matrix whose elements
are the nonzero eigenvalues of $\Rm_k$, and $\Um_k\in
\mathbb{C}^{M\times r_k}$ is the tall unitary matrix formed by the
corresponding eigenvectors.

We further consider the one-ring scattering model \cite{adhikary2013joint}. Then the correlation between the channel coefficients of antennas $1\le m,p \le M$ is given by 
\begin{align}
[\Rm_k]_{m,p}=\frac{1}{2\Delta_k}\int_{{\theta _k} - {\Delta_k}}^{{\theta_k} + {\Delta_k}} {{e^{ - j\frac{2\pi}{\lambda}\bm{\Phi}(\alpha) (\uv_m-\uv_p)}}d\alpha } ,
\end{align}
where $\theta_k$ is the azimuth angle of user $k$ with respect to the orientation perpendicular to the array axis, $\Delta_k$ indicates the angular spread (AS) of departure to user $k$, $\bm{\Phi}(\alpha)=[\cos(\alpha),\sin(\alpha)]$ is the wave vector for a planar wave impinging with the angle of $\alpha$, $\lambda$ is the wavelength and $\uv_m=[x_m,y_m]^{\T}$ is the vector indicating the position of BS antenna $m$ in the 2-D coordinate system. Let us consider the special but important case of a uniform linear array (ULA) placed at the origin along the y-axis. Denoting by $\lambda D$ the spacing of antenna elements, the covariance matrix of the channel is given by the Toeplitz form 
\begin{align} 
	[\Rm_k]_{m,p}=\frac{1}{2\Delta_k}\int_{{\theta _k} - {\Delta _k}}^{{\theta _k} + {\Delta _k}} {{e^{ - j2\pi D(m - p)\sin (\alpha )}}d\alpha }.\label{eq:one-ring}
\end{align}

Suppose that $K$ users are selected to form $G$ groups based on the
similarity of their channel covariance matrices. We let $K_g$ denote the
number of users in group $g$, such that $K=\sum\nolimits_{g=1}^GK_g$. We
make the same assumption as in \cite{adhikary2013joint} that users in the
same group $g$ share the same covariance matrix $\Rm_g=\Um_g\bm{\Lambda}_g\Um_g$. More precisely, the channel vector of user $k$ in group $g$ is given by $\hv_k=\Um_g \bm{\Lambda}_g^{\frac{1}{2}}\wv_k$.

The baseline schemes are the following:
\begin{itemize}
	\item The sum capacity $C_{\text{sum}}$ that can be derived with the
	MAC-BC duality as 
	\begin{align}
		C_{\text{sum}}=\max_{\sum_i P_i\le
			P}\log\det(\Id+\Hm^H\diag(P_1,\ldots,P_K)\Hm). 
	\end{align}%
        The above problem is convex and can be solved efficiently.
      \item The unicast scheme with the covariance matrices obtained by solving the following problem 
		\begin{align*}
	\Pc_{\text{uni}}:\quad&\mathop {\max }\limits_{\Qm_k\succeq 0}\sum_{k\in[K]}R_k\nonumber\\
	&\text{s.t. } \;\eqref{eq:unicast_SINR},\sum_{k\in [K]} \trace\left(\Qm_k\right) \le P.
	\end{align*}
      \item The 1-layer RS (the only common message sent by the BS is intended to all the
	users) with successive decoding \cite{clerckx2019rate, joudeh2016robust,yang2019optimization}.
      \item The ZF scheme with precoder $(\Hm^\H)^\dag \diag\{\sqrt{\pv^{\text{opt}}}\}$ \cite{wiesel2008zero}, where $\pv^{\text{opt}}$ is the solution to
	\begin{align}
	\max_{\pv\ge 0}  \sum_k \log(1+p_k)\;\;\text{s.t.  } \sum\nolimits_k p_k[(\Hm^\H \Hm)^{-1}]_{k,k}\le P.
	\end{align}
\end{itemize}
 We emphasize that, for comparison fairness, we apply a similar CCCP to
 solve the precoder optimizations for both the unicast scheme and
 the 1-layer RS scheme. Note that the algorithms proposed in the
 references for the 1-layer RS do not necessarily consider power
 optimization for each individual message, resulting a strictly
 suboptimal solution compared to that obtained with CCCP. 

We first verify the performance of the general RS scheme with joint
decoding for a small number of users, i.e., $K=3$. To that end, we
assume $G=2$ groups, and each user can be inside any of the clusters
with equal probability. We set the azimuth angle of the $g$-th group as
$\theta_g=-\frac{\pi}{3}+\frac{15\pi}{180}(g-1)$ and assume the same AS
for all groups as $\Delta=\frac{5\pi}{180}$. Algorithm \ref{algo:max_sr}
is applied to obtain a stationary point of the sum rate of the general
RS with joint decoding. It is demonstrated in Fig. \ref{fig:MM_vs_ZF}
that both the 1-layer RS and the general RS considerably increase
the sum rate compared to unicast, especially at high SNR. In addition,
with only 3 more streams in the general RS, we can improve the sum rate
by more than 1 bit compared to the 1-layer RS at medium-high SNR.
As the users in the same group have correlated channels, it is
reasonable to decode the interference inside the group instead of
treating it as noise. Since the users are spatially correlated in
the one-ring scattering model, the performance of the ZF scheme
degrades severely as demonstrated in Fig. \ref{fig:MM_vs_ZF}. Therefore, in
the following, we will not consider ZF.
\begin{figure}
	\centering
	\includegraphics[width=0.55\textwidth]{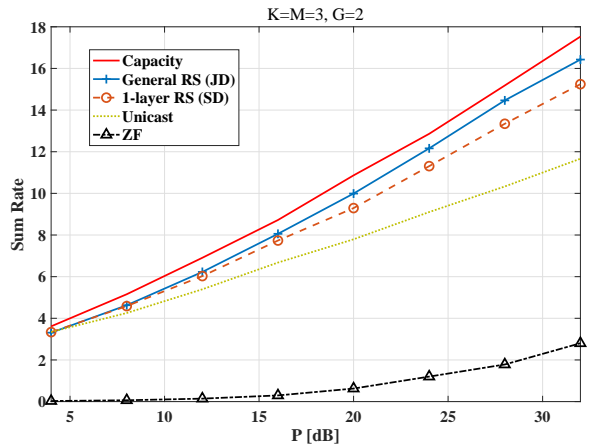}
	\caption{Comparison of the capacity, sum rates under the general RS (JD), 1-layer RS (SD), unicast and ZF, $K=3, M=3, G=2,\Delta=\frac{5\pi}{180}, \theta_g=-\frac{\pi}{3}+\frac{15\pi}{180}(g-1) $.}\label{fig:MM_vs_ZF}
\end{figure}

 Now let us adopt Algorithm
\ref{algo:optimize_with_stream_selection} to solve
$\Pc_{\text{SR}}^{\text{SD}}$ with the proposed stream selection for the
4-user case. We set $N_{\text{SEA}}=15$. Thus the first step in
Algorithm \ref{algo:stream_selection} actually preserves all the $2^K-1$
steams. Since each $\tilde{\collectionS}$ generated by Algorithm
\ref{algo:stream_selection} contains $8$ streams instead of the original
$15$ streams, the decoding complexity for each user is substantially
reduced. Recall that for a given collection $\tilde{\collectionS}$, the
decoding order is fixed after the stream selection as explained in
Section \ref{sec:stream_sel}. We consider two scenarios with disjoint
and overlapping eigen-subspaces, respectively, in Fig.
\ref{fig:4_user_SIC}. We set
$\theta_g=-\frac{\pi}{3}+\Delta+\frac{15\pi}{180}(g-1)$ and
$\Delta=\frac{5\pi}{180}$ for the former scenario, while $\theta_g=-\frac{\pi}{3}+\frac{10\pi}{180}(g-1)$
and $\Delta=\frac{15\pi}{180}$ for the latter. In the scenario
with disjoint eigen-subspaces, inter-group interference is small and
common messages across different groups are not needed,
the 1-layer RS boils down to the unicast scheme. It is
worth mentioning that in the simulation, the 1-layer RS slightly
outperforms the unicast since we randomly put four users in two groups. Therefore, it is possible that all the users are in the same group so that they all benefit from the common message. The general RS after stream selection, which has three additional streams, further improves the sum rate. However, in such disjoint case, extra streams only brings slight intra-group rate increase. In contrast, in the overlapping case, the common message to all the users increases the sum rate by 2 bits compared to unicast. Apart from this, a large gain is further enabled by three additional inter-group common messages in the general RS.
\begin{figure}
	\centering
	\includegraphics[width=0.8\textwidth]{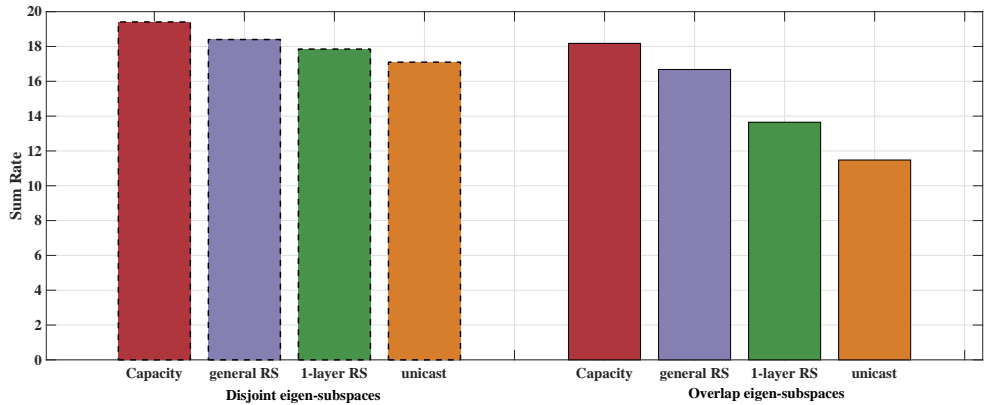}
	\caption{Comparison of the capacity, the sum rates under the
        general RS after stream selection (SD), the 1-layer RS scheme~(SD), and
        the unicast scheme. $K=4, M=4, G=2, P=30\,$dB. Two considered
        scenarios:~1) disjoint eigen-spaces~($\Delta=\frac{5\pi}{180},
        \theta_g=-\frac{\pi}{3}+\Delta+\frac{15\pi}{180}(g-1)$),
        2)~overlapping eigen-spaces~($\Delta=\frac{15\pi}{180},
        \theta_g=-\frac{\pi}{3}+\frac{10\pi}{180}(g-1)$).}\label{fig:4_user_SIC}
\end{figure} 

In Fig. \ref{fig:compare_large_K}, we consider a larger number of users,
i.e., $K=8$. In this simulation, we assume that 8 users can be inside
any of the two groups with equal probability. We set
$\Delta=\frac{20\pi}{180}$ and
$\theta_g=-\frac{\pi}{3}+\frac{\pi}{8}(g-1)$, which corresponds to
overlapping eigen-spaces between the two groups. In this case, the
general RS scheme involves 255 streams, which is infeasible due to
high complexity. Thanks to the stream selection algorithm, it is
possible to investigate the performance of the general RS under such
relatively large number of users. We set
$N_{\text{SEA}}=38$, then each $\tilde{\collectionS}$ generated by
Algorithm \ref{algo:stream_selection} contains 30 streams. Furthermore,
the number of constraints in successive decoding is reduced to 16, which
is implementable in practice. Fig. \ref{fig:compare_large_K} suggests that the 1-layer RS slightly
outperforms the unicast, while the general RS further provides a large gain thanks to the use of
multiple inter-group common messages. This observation
confirms the effectiveness of our stream selection in identifying 
``good'' streams. 
\begin{figure}
	\centering
	\includegraphics[width=0.6\textwidth]{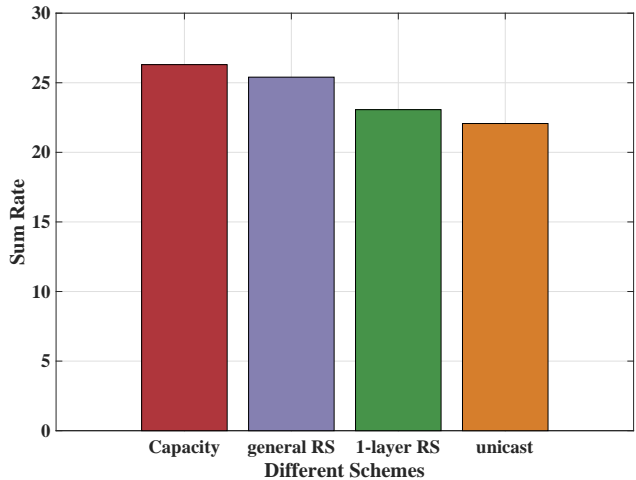}
	\caption{Comparison of the capacity, the sum rates under the
        general RS after stream selection (SD), the 1-layer RS
        scheme~(SD), and the unicast scheme. $K=M=8, P=30 $dB, $G=2, \Delta=\frac{20\pi}{180}, \theta_g=-\frac{\pi}{3}+\frac{\pi}{8}(g-1)$.}\label{fig:compare_large_K}
\end{figure}

Finally, let us consider i.i.d.~channels and investigate the performance of the general RS with imperfect CSIT. 
We assume that $\rvMat{H} = \hat{\rvMat{H}} + \tilde{\rvMat{H}}$ as in
\eqref{eq:imperfect_H}. Here, we let the entries of $\tilde{\rvMat{H}}$
and $\hat{\rvMat{H}}$ be i.i.d.~circularly symmetric Gaussian with
variances $\sigma^2$ and $1-\sigma^2$, respectively. 
In Fig. \ref{fig:imperfect_gain}, the sum rate with the proposed
regularization is calculated as follows. We first optimize the sum rate
based on the lower bound, i.e., replace the rate constraints in
\eqref{region_SIC} by those in \eqref{imperfect_SIC} in Algorithm
\ref{algo:max_sr_SIC}, and achieve a group of feasible $\Qm$. Then, we
substitute such returned $\Qm$ into the real rate constraints in
\eqref{region_SIC} to get a set of achievable rates, and thus the sum
rate. To obtain the sum rate of the general RS without regularization,
we solve $\Pc_{\text{SR}}^{\text{SIC}}$ with $\hat{\Hm}$ directly
replacing $\Hm$, and then substitute the returned $\Qm$ into \eqref{region_SIC}
with the real channel matrix $\Hm$ to get the achievable rates. The
results show that the proposed regularization brings remarkable rate
improvement especially when the CSIT error $\sigma^2$ is large, e.g., almost
up to $200\%$ gain when $\sigma^2=0.9$. 
In Fig.\ref{fig:imperfect_rate}, we compare the performance of the
unicast and the general RS after stream selection. The sum rate of the
unicast scheme with regularization is obtained by steps similar to  those for
the general RS with regularization. We can observe that the general RS
always outperforms the unicast scheme even with a large estimation error. 

\begin{figure}
	\begin{minipage}{0.48\textwidth}
		\centering
		\includegraphics[width=1.1\textwidth]{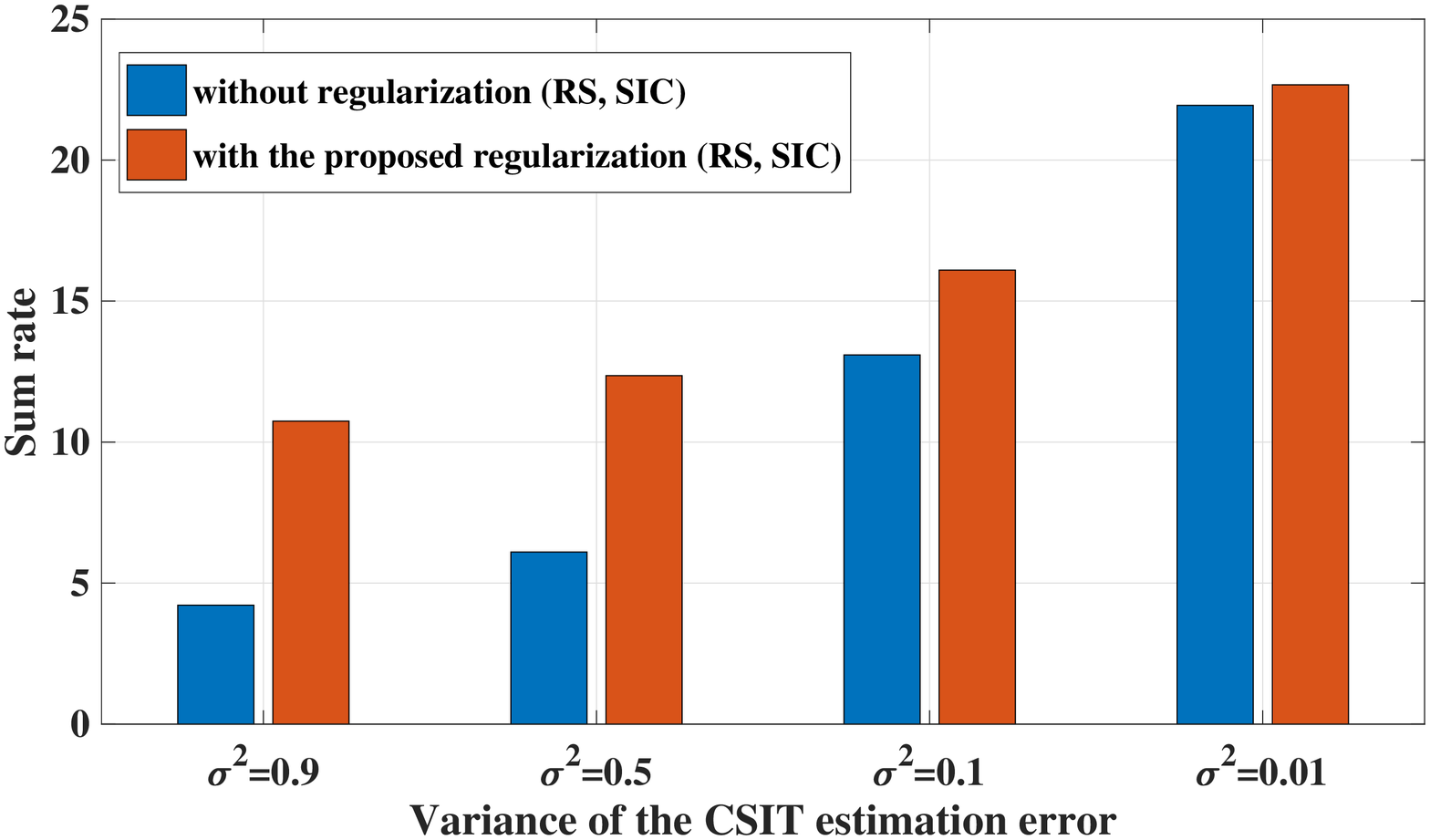}
		\caption{Sum rate comparisons without and with the proposed regularization, imperfect CSIT, $K=M=4, P=30 $dB, $N_{\text{SEA}}=2^K-1$, both with the general RS, i.i.d. channels.}\label{fig:imperfect_gain}
	\end{minipage}
	\hfill
	\begin{minipage}{0.48\textwidth}
			\centering
		\includegraphics[width=1.1\textwidth]{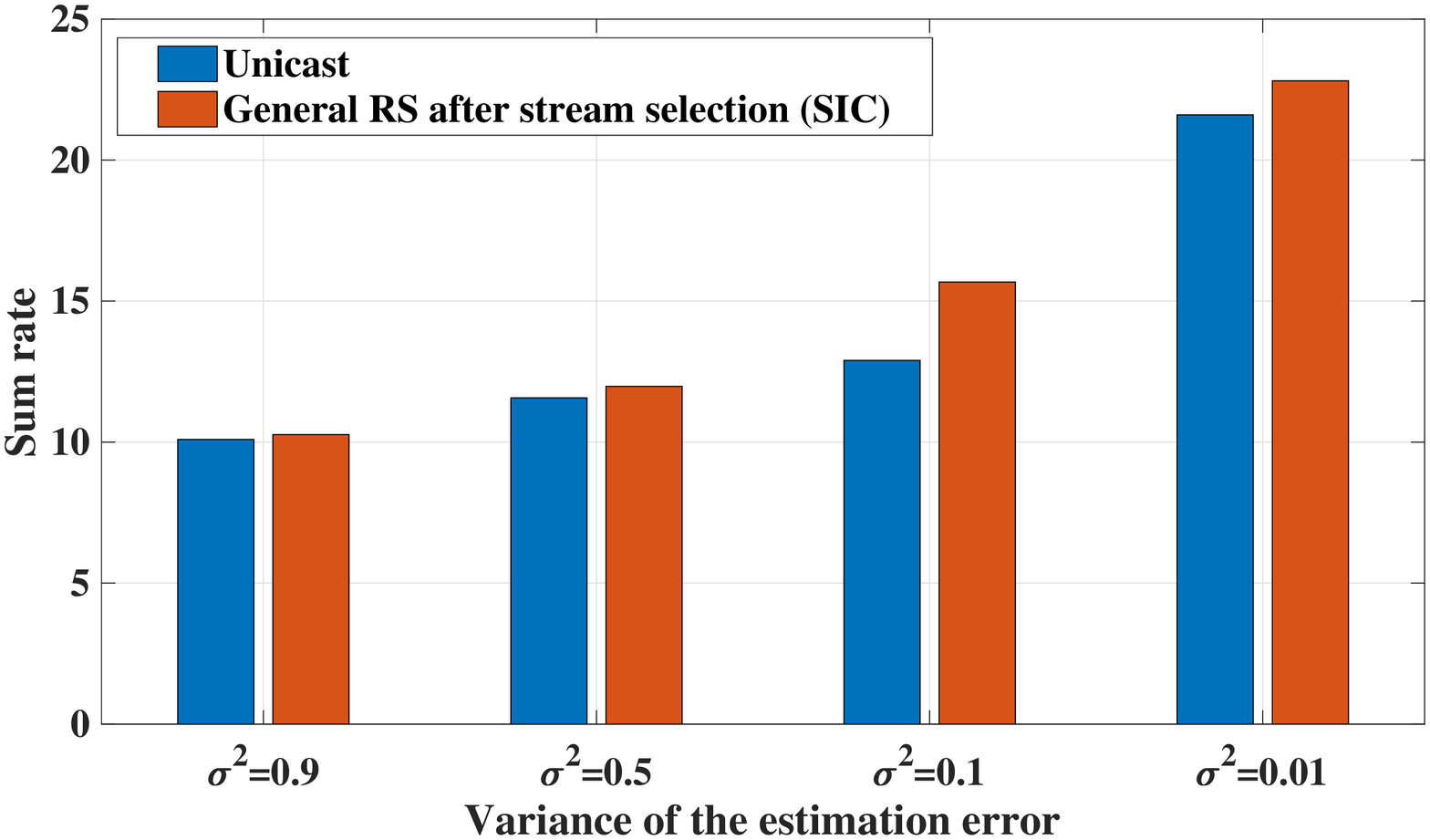}
		\caption{Sum rate comparisons under the general RS and unicast, both with the proposed regularization, imperfect CSIT, $K=M=4, P=30 $dB, $N_{\text{SEA}}=2^K-1$, i.i.d. channels.}\label{fig:imperfect_rate}
	\end{minipage}
\end{figure}

\section{Conclusion}\label{sec:con}
In this work, we have investigated the  general RS
scheme applied for multi-antenna downlink communications. We have
proposed a full range of novel solutions including precoder design,
stream selection, and imperfect CSIT regularization. We have run
numerical simulations showing that our RS solutions, even under
practical constraints, can provide substantial performance gains over
existing schemes. It is worth noting that since RS is linear, the implementation cost of the proposed algorithms are comparable to those applied in practical systems. In summary, our study has demonstrated that the general RS is a viable way to mitigate interference in future multi-antenna networks.
\bibliographystyle{IEEEtran}
\bibliography{IEEEabrv,RS_op2019}

\begin{thebibliography}{10}
\providecommand{\url}[1]{#1}
\csname url@rmstyle\endcsname
\providecommand{\newblock}{\relax}
\providecommand{\bibinfo}[2]{#2}
\providecommand\BIBentrySTDinterwordspacing{\spaceskip=0pt\relax}
\providecommand\BIBentryALTinterwordstretchfactor{4}
\providecommand\BIBentryALTinterwordspacing{\spaceskip=\fontdimen2\font plus
\BIBentryALTinterwordstretchfactor\fontdimen3\font minus
  \fontdimen4\font\relax}
\providecommand\BIBforeignlanguage[2]{{%
\expandafter\ifx\csname l@#1\endcsname\relax
\typeout{** WARNING: IEEEtran.bst: No hyphenation pattern has been}%
\typeout{** loaded for the language `#1'. Using the pattern for}%
\typeout{** the default language instead.}%
\else
\language=\csname l@#1\endcsname
\fi
#2}}

\bibitem{bogale2016massive}
T.~E. Bogale and L.~B. Le, ``{Massive MIMO and mmWave for 5G wireless HetNet}:
  {P}otential benefits and challenges,'' \emph{{IEEE} Veh. Technol. Mag.},
  vol.~11, no.~1, pp. 64--75, Mar. 2016.

\bibitem{zhu2017hybrid}
G.~Zhu, K.~Huang, V.~K. Lau, B.~Xia, X.~Li, and S.~Zhang, ``Hybrid beamforming
  via the {K}ronecker decomposition for the millimeter-wave massive {MIMO}
  systems,'' \emph{{IEEE} J. Sel. Areas Commun.}, vol.~35, no.~9, pp.
  2097--2114, Sept. 2017.

\bibitem{Caire:2003}
G.~Caire and S.~Shamai~(Shitz), ``On the achievable throughput of a
  multiantenna {G}aussian broadcast channel,'' \emph{{IEEE} Trans. Inf.
  Theory}, vol.~49, no.~7, pp. 1691 -- 1706, July 2003.

\bibitem{MIMOBC2006}
H.~Weingarten, Y.~Steinberg, and S.~Shamai~(Shitz), ``The capacity region of
  the {G}aussian multiple-input multiple-output broadcast channel,''
  \emph{{IEEE} Trans. Inf. Theory}, vol.~52, no.~9, pp. 3936 --3964, Sept.
  2006.

\bibitem{yang2005impact}
S.~Yang and J.-C. Belfiore, ``The impact of channel estimation error on the
  {DPC} region of the two-user {Gaussian} broadcast channel,'' in
  \emph{Proc.~43rd Allerton Conference}, 2005.

\bibitem{yoo2006optimality}
T.~Yoo and A.~Goldsmith, ``On the optimality of multiantenna broadcast
  scheduling using zero-forcing beamforming,'' \emph{{IEEE} J. Sel. Areas
  Commun.}, vol.~24, no.~3, pp. 528--541, Mar. 2006.

\bibitem{lee2007high}
J.~Lee and N.~Jindal, ``{High SNR analysis for MIMO broadcast channels: Dirty
  paper coding versus linear precoding},'' \emph{{IEEE} Trans. Inf. Theory},
  vol.~53, no.~12, pp. 4787--4792, Dec. 2007.

\bibitem{li2018linearly}
Z.~Li, S.~Yang, and S.~Shamai~(Shitz), ``On linearly precoded rate splitting
  for {MIMO} broadcast channels,'' \emph{Available on arXiv:1808.01810.}

\bibitem{carleial1978interference}
A.~Carleial, ``Interference channels,'' \emph{{IEEE} Trans. Inf. Theory},
  vol.~24, no.~1, pp. 60--70, Jan. 1978.

\bibitem{HK:1981}
T.~S. Han and K.~Kobayashi, ``A new achievable rate region for the interference
  channel,'' \emph{{IEEE} Trans. Inf. Theory}, vol.~27, no.~1, pp. 49 -- 60,
  Jan. 1981.

\bibitem{etkin2008gaussian}
R.~H. Etkin, D.~N.~C. Tse, and H.~Wang, ``Gaussian interference channel
  capacity to within one bit,'' \emph{{IEEE} Trans. Inf. Theory}, vol.~54,
  no.~12, pp. 5534--5562, Dec. 2008.

\bibitem{yang2013degrees}
S.~Yang, M.~Kobayashi, D.~Gesbert, and X.~Yi, ``Degrees of freedom of time
  correlated {MISO} broadcast channel with delayed {CSIT},'' \emph{{IEEE}
  Trans. Inf. Theory}, vol.~59, no.~1, pp. 315--328, Jan. 2013.

\bibitem{sun2017MM}
Y.~{Sun}, P.~{Babu}, and D.~P. {Palomar}, ``Majorization-minimization
  algorithms in signal processing, communications, and machine learning,''
  \emph{{IEEE} Trans. Signal Process.}, vol.~65, no.~3, pp. 794--816, Feb.
  2017.

\bibitem{clerckx2019rate}
B.~Clerckx, Y.~Mao, R.~Schober, and H.~V. Poor, ``Rate-splitting unifying
  {SDMA, OMA, NOMA}, and multicasting in {MISO} broadcast channel: A simple
  two-user rate analysis,'' \emph{arXiv preprint arXiv:1906.04474}, 2019.

\bibitem{joudeh2016robust}
H.~Joudeh and B.~Clerckx, ``Robust transmission in downlink multiuser {MISO}
  systems: {A} rate-splitting approach,'' \emph{{IEEE} Trans. Signal Process.},
  vol.~64, no.~23, pp. 6227--6242, July 2016.

\bibitem{mao2018rate}
Y.~Mao, B.~Clerckx, and V.~O. Li, ``Rate-splitting multiple access for downlink
  communication systems: {B}ridging, generalizing, and outperforming {SDMA and
  NOMA},'' \emph{EURASIP Journal on Wireless Communications and Networking},
  vol. 2018, no.~1, p. 133, May 2018.

\bibitem{yang2019optimization}
Z.~Yang, M.~Chen, W.~Saad, and M.~Shikh-Bahaei, ``Optimization of rate
  allocation and power control for rate splitting multiple access {(RSMA)},''
  \emph{arXiv preprint arXiv:1903.08068}, 2019.

\bibitem{dai2016rate}
M.~Dai, B.~Clerckx, D.~Gesbert, and G.~Caire, ``A rate splitting strategy for
  massive {MIMO} with imperfect {CSIT},'' \emph{{IEEE} Trans. Wireless
  Commun.}, vol.~15, no.~7, pp. 4611--4624, 2016.

\bibitem{joudeh2016partial}
H.~{Joudeh} and B.~{Clerckx}, ``Sum-rate maximization for linearly precoded
  downlink multiuser {MISO} systems with partial {CSIT: A} rate-splitting
  approach,'' \emph{{IEEE} Trans. Commun.}, vol.~64, no.~11, pp. 4847--4861,
  Nov. 2016.

\bibitem{tervo2015optimal}
O.~Tervo, L.-N. Tran, and M.~Juntti, ``Optimal energy-efficient transmit
  beamforming for multi-user {MISO} downlink,'' \emph{{IEEE} Trans. Signal
  Process.}, vol.~63, no.~20, pp. 5574--5588, Oct. 2015.

\bibitem{vucic2008robust}
N.~Vucic and H.~Boche, ``Robust {QoS}-constrained optimization of downlink
  multiuser {MISO} systems,'' \emph{{IEEE} Trans. Signal Process.}, vol.~57,
  no.~2, pp. 714--725, Feb. 2008.

\bibitem{unicastBC2008}
S.~S. {Christensen}, R.~{Agarwal}, E.~{De Carvalho}, and J.~M. {Cioffi},
  ``{Weighted sum-rate maximization using weighted MMSE for MIMO-BC beamforming
  design},'' \emph{{IEEE} Trans. Wireless Commun.}, vol.~7, no.~12, pp.
  4792--4799, Dec. 2008.

\bibitem{unicastBC2011}
Q.~{Shi}, M.~{Razaviyayn}, Z.~{Luo}, and C.~{He}, ``An iteratively weighted
  {MMSE} approach to distributed sum-utility maximization for a {MIMO}
  interfering broadcast channel,'' \emph{{IEEE} Trans. Signal Process.},
  vol.~59, no.~9, pp. 4331--4340, Sep. 2011.

\bibitem{multicastBC2014}
X.~{Liu}, F.~{Gao}, G.~{Wang}, and X.~{Wang}, ``Joint beamforming and user
  selection in multicast downlink channel under secrecy-outage constraint,''
  \emph{{IEEE} Commun. Lett.}, vol.~18, no.~1, pp. 82--85, Jan. 2014.

\bibitem{multigroupmulticastBC2011}
R.~{Bhagavatula} and R.~W. {Heath}, ``Adaptive limited feedback for sum-rate
  maximizing beamforming in cooperative multicell systems,'' \emph{{IEEE}
  Trans. Signal Process.}, vol.~59, no.~2, pp. 800--811, Feb. 2011.

\bibitem{facchinei2017ghost}
F.~Facchinei, V.~Kungurtsev, L.~Lampariello, and G.~Scutari, ``Ghost penalties
  in nonconvex constrained optimization: Diminishing stepsizes and iteration
  complexity,'' \emph{arXiv preprint arXiv:1709.03384}, 2017.

\bibitem{boyd2004convex}
S.~Boyd and L.~Vandenberghe, \emph{Convex optimization}.\hskip 1em plus 0.5em
  minus 0.4em\relax Cambridge university press, 2004.

\bibitem{bengtsson2001optimal}
M.~Bengtsson and B.~Ottersten, ``Optimal and suboptimal transmit beamforming,''
  in \emph{{ Handbook of Antennas in Wireless Communications}}.\hskip 1em plus
  0.5em minus 0.4em\relax CRC Press, 2001.

\bibitem{schubert2004solution}
M.~Schubert and H.~Boche, ``Solution of the multiuser downlink beamforming
  problem with individual {SINR} constraints,'' \emph{{IEEE} Trans. Veh.
  Technol.}, vol.~53, no.~1, pp. 18--28, Jan. 2004.

\bibitem{sidiropoulos2006transmit}
N.~D. Sidiropoulos, T.~N. Davidson, and Z.-Q. Luo, ``Transmit beamforming for
  physical-layer multicasting,'' \emph{{IEEE} Trans. Signal Process.}, vol.~54,
  no.~6, pp. 2239--2251, June 2006.

\bibitem{karipidis2008quality}
E.~Karipidis, N.~D. Sidiropoulos, and Z.-Q. Luo, ``Quality of service and
  max-min fair transmit beamforming to multiple cochannel multicast groups,''
  \emph{{IEEE} Trans. Signal Process.}, vol.~56, no.~3, pp. 1268--1279, Mar.
  2008.

\bibitem{sanguinetti2016large}
L.~Sanguinetti, R.~Couillet, and M.~Debbah, ``Large system analysis of base
  station cooperation for power minimization,'' \emph{{IEEE} Trans. Wireless
  Commun.}, vol.~15, no.~8, pp. 5480--5496, Aug. 2016.

\bibitem{liao2011qos}
W.-C. Liao, T.-H. Chang, W.-K. Ma, and C.-Y. Chi, ``{QoS}-based transmit
  beamforming in the presence of eavesdroppers: {A}n optimized
  artificial-noise-aided approach,'' \emph{{IEEE} Trans. Signal Process.},
  vol.~59, no.~3, pp. 1202--1216, Mar. 2011.

\bibitem{charnes1959chance}
A.~Charnes and W.~W. Cooper, ``Chance-constrained programming,''
  \emph{Management science}, vol.~6, no.~1, pp. 73--79, 1959.

\bibitem{adhikary2013joint}
A.~Adhikary, J.~Nam, J.-Y. Ahn, and G.~Caire, ``Joint spatial division and
  multiplexing—{T}he large-scale array regime,'' \emph{{IEEE} Trans. Inf.
  Theory}, vol.~59, no.~10, pp. 6441--6463, Oct. 2013.

\bibitem{wiesel2008zero}
A.~Wiesel, Y.~C. Eldar, and S.~Shamai~(Shitz), ``Zero-forcing precoding and
  generalized inverses.'' \emph{{IEEE} Trans. Signal Process.}, vol.~56, no.~9,
  pp. 4409--4418, Sept. 2008.

\end{thebibliography}
\end{document}